\def\draft{0}  
\newcommand{\Rnote}[1]{{\bf [Ronen's Note: #1]}}
\newcommand{\Anote}[1]{{\bf [Alon's Note: #1]}}
\newcommand{\Nnote}[1]{{\bf [Noam's Note: #1]}}
\newcommand{\Rnote}[1]{}
\newcommand{\Anote}[1]{}
\newcommand{\Gnote}[1]{}
\newcommand{\Nnote}[1]{}
\newcommand{\remove}[1]{}
\newtheorem{theorem}{\bf Theorem}[section]
\newtheorem{definition}[theorem]{\bf Definition}
\newenvironment{procthm}{\begin{theorem}}{\end{theorem}}
\newenvironment{procdef}[1]{\begin{definition}[#1]}{\end{definition}}
\newtheorem{lemma}[theorem]{\bf Lemma}
\newtheorem{proposition}[theorem]{\bf Proposition}
\newtheorem{corollary}[theorem]{\bf Corollary}
\newtheorem{claim}[theorem]{\bf Claim}
\newtheorem{remk}[theorem]{\bf Remark}
\newtheorem{examp}[theorem]{\bf Example}
\newenvironment{remark}{\begin{remk} \begin{normalfont}}{\end{normalfont}
\end{remk}}
\newenvironment{ex}{\begin{examp} \begin{normalfont}}{\end{normalfont}
\end{examp}}
\def\FullBox{\hbox{\vrule width 8pt height 8pt depth 0pt}}
\def\qed{\ifmmode\qquad\FullBox\else{\unskip\nobreak\hfil
\penalty50\hskip1em\null\nobreak\hfil\FullBox
\parfillskip=0pt\finalhyphendemerits=0\endgraf}\fi}
\def\qedsketch{\ifmmode\Box\else{\unskip\nobreak\hfil
\penalty50\hskip1em\null\nobreak\hfil$\Box$
\parfillskip=0pt\finalhyphendemerits=0\endgraf}\fi}
\newenvironment{proof}{\begin{trivlist} \item {\bf Proof:~~}}
  {\qed\end{trivlist}}
\newenvironment{proofsketch}{\begin{trivlist} \item {\bf Proof Sketch:~~}}
  {\qedsketch\end{trivlist}}
\newcommand{\eqdef}{\mathbin{\stackrel{\rm def}{=}}}
\newcommand{\zo}{\{0,1\}}
\newcommand{\zon}{\{0,1\}^n}
\newcommand{\zok}{\{0,1\}^k}
\newcommand{\pr}[1]{\Pr\left[#1\right]}
\newcommand{\E}{\mathop{\mathrm E}\displaylimits}
\newcommand{\eps}{\varepsilon}
\newcommand{\MyAtop}[2]{\genfrac{}{}{0pt}{}{#1}{#2}}
\newcommand{\be}{\begin{equation}}
\newcommand{\ee}{\end{equation}}
\newcommand{\h}[1]{\mathrm{height}(#1)}
\title{Sequential Rationality in Cryptographic Protocols}
\begin{document}

\ifnum\draft=2

\author{
\IEEEauthorblockN{Ronen Gradwohl}
\IEEEauthorblockA{Kellogg School of Management\\
Northwestern University\\
Evanston, IL, USA\\
Email: r-gradwohl@kellogg.northwestern.edu}
\and
\IEEEauthorblockN{Noam Livne}
\IEEEauthorblockA{Department of CS and Applied Mathematics\\
Weizmann Institute of Science\\
Rehovot, Israel\\
Email: noam.livne@weizmann.ac.il}
\and
\IEEEauthorblockN{Alon Rosen}
\IEEEauthorblockA{School of Computer Science\\
IDC Herzliya\\
Herzliya, Israel\\
Email: alon.rosen@idc.ac.il}
}

\else
\author{Ronen Gradwohl\thanks{Kellogg School of Management, Northwestern University, Evanston, IL 60208, USA. E-mail: \texttt{r-gradwohl@kellogg.northwestern.edu}. Work supported in part by ISF grant no. 334/08.}\and Noam Livne\thanks{Department of Computer Science and Applied Mathematics, Weizmann Institute of Science, Rehovot, 76100 Israel. E-mail: \texttt{noam.livne@weizmann.ac.il}. Work supported by ISF grant no. 334/08.} \and Alon Rosen\thanks{School of Computer Science, Herzliya IDC, 46150, Israel. E-mail: \texttt{alon.rosen@idc.ac.il}. Work supported by ISF grant no. 334/08.}}

\date{}

\begin{titlepage}
\fi

\maketitle

\ifnum\draft=2
\else
\thispagestyle{empty}
\fi

\begin{abstract}

Much of the literature on rational cryptography focuses on analyzing the strategic properties of cryptographic protocols. However, due to the presence of computationally-bounded players and the asymptotic nature of cryptographic security, a definition of sequential rationality for this setting has thus far eluded researchers.

We propose a new framework for overcoming these obstacles, and provide the first definitions of computational solution concepts that guarantee sequential rationality. We argue that natural computational variants of subgame perfection are too strong for cryptographic protocols. As an alternative, we introduce a weakening called threat-free Nash equilibrium that is more permissive but still eliminates the undesirable ``empty threats'' of non-sequential solution concepts.

To demonstrate the applicability of our framework, we revisit the problem of implementing a mediator for correlated equilibria (Dodis-Halevi-Rabin, Crypto'00), and propose a variant of their protocol that is sequentially rational for a non-trivial class of correlated equilibria. Our treatment provides a better understanding of the conditions under which mediators in a correlated equilibrium can be replaced by a stable protocol.

\end{abstract}

\ifnum\draft=2
\begin{IEEEkeywords}
game theory; cryptography;

\end{IEEEkeywords}

\IEEEpeerreviewmaketitle
\else

{\bf Keywords:} rational cryptography, Nash equilibrium, subgame perfect equilibrium, sequential rationality, cryptographic protocols, correlated equilibrium

\end{titlepage}
\fi

\tableofcontents

\section{Introduction}

A recent line of research has considered replacing the traditional cryptographic modeling of adversaries with a game-theoretic one. Rather than assuming arbitrary {\em malicious} behavior, participants are viewed as being self-interested, {\em rational} entities that wish to maximize their own profit, and that would deviate from a protocol's prescribed instructions if and only if it is in their best interest to do so.

Such game theoretic modeling is expected to facilitate the task of protocol design, since rational behavior may be easier to handle than malicious behavior. It also has the advantage of being more realistic in that it does not assume that some of the parties honestly follow the protocol's instructions, as is frequently done in cryptography.

The interplay between cryptography and game theory can also be beneficial to the latter. For instance, using tools from secure computation, it has been shown how to transform games in the mediated model into games in the unmediated model.

But regardless of whether one analyzes cryptographic protocols from a game theoretic perspective or whether one uses protocols to enhance game theory, it is clear that the results are meaningful only if one provides an adequate framework for such analyses.

\subsection{Computational Nash Equilibrium}
\label{CNE.sec}

Applying game-theoretic reasoning in a cryptographic context consists of modeling interaction as a {\em game}, and designing a protocol that is in {\em equilibrium}. The game specifies the model of interaction, as well as the utilities of the various players as a function of the game's outcome. The protocol lays out a specific plan of action for each player, with the goal of realizing some pre-specified task. Once a protocol has been shown to be in equilibrium, rational players are expected to follow it, thus reaching the desired outcome.

A key difficulty in applying game-theoretic reasoning to the analysis of cryptographic protocols stems from the latter's use of computational infeasibility. Whereas game theory places no bounds on the computational ability of players, in cryptography it is typically assumed that players are computationally bounded. Thus, in order to retain the meaningfulness of cryptographic protocols, it is imperative to restrict the set of strategies that are available to protocol participants. This gives rise to a natural analog of Nash equilibrium (NE), referred to as {\em computational Nash equilibrium} (CNE): any polynomial-time computable deviation of a player from the specified protocol can improve her utility by only a negligible amount (assuming other players stick to the prescribed strategy).

Consider, for example, the following (two-stage, zero-sum) game (related to a game studied by Ben-Sasson et al.~\cite{BKK07} and Fortnow
and Santhanam \cite{FS10}), which postulates the existence of a one-way permutation $f:\zon\mapsto\zon$.

\ifnum\draft=2
\begin{ex} (One-way permutation game):
\else
\begin{ex} {\bf (One-way permutation game):}
\fi
\label{ex:OWP-game}
\begin{enumerate}
\item $P_1$ chooses some $x\in\zon$, and sends $f(x)$.
\item $P_2$ sends a message $z\in\zon$.
\item $P_2$ wins (gets payoff 1) if $z=x$ (and gets -1 otherwise).
\end{enumerate}
\end{ex}
In classical game theory, in all NE of this game $P_2$ wins, since there always exists some $z$ such that $z=x$. However, in the computational setting, the following is a CNE: both players choose their messages uniformly at random (resulting in an expected loss for $P_2$). This is true because if $P_2$ chooses $z$ at random, then $P_1$ can never improve his payoff by not choosing at random. If $P_1$ chooses $x$ at random, then by the definition of a one-way permutation, any computationally-bounded strategy $\sigma_2$ of $P_2$ will be able to guess the value of $x$ with at most negligible (in $n$) probability. Thus, the expected utility of $P_2$ using $\sigma_2$ is
negligible, and so he loses at most that much by sticking to his CNE strategy (i.e.\ picking some $z$ at random).

\subsection{Computational Subgame Perfection}
\label{CSPE.sec}

The notion of CNE serves as a first stepping stone towards a game-theoretic treatment of cryptographic protocols. However, protocols are typically {\em interactive}, and CNE does not take their sequential nature into consideration. 

In traditional game theory interaction is modeled via
extensive games. The most basic equilibrium notion in this setting
is {\em subgame perfect equilibrium} (SPE),
which requires players' strategies to be in NE at any point of the interaction, regardless of the history of prior actions taken by other players. Basically, this ensures that players will not reconsider their actions as a result of reaching certain histories (a.k.a.~``empty threats").

As already noted in previous works (cf.\ \cite{K08,KN08b,OPRV09}), it is not at all clear how to adapt SPE to the computational setting. A natural approach would be to require the strategies to be CNE at every possible history. However, if we condition on the history, then this means that {\em different} machines can and will do much better than the prescribed equilibrium strategy. For example, in the one-way permutation game of Example~\ref{ex:OWP-game}, given any message history, a machine $M$ can simply have the correct inverse hardwired.

Although this requirement can be relaxed to ask that the prescribed strategy should be better than any other fixed machine on all inputs, this again may be too strong, since a fixed machine can always do better on some histories. Therefore, it seems that we must accept the following: for any machine $M$, with {\em high probability} over possible message histories, the prescribed strategy does at least as well as $M$. However, it turns out that this approach also fails to capture our intuitive understanding of a computational SPE (CSPE). Consider the following (two-stage) variant of the one-way permutation game from Example~\ref{ex:OWP-game}:

\ifnum\draft=2
\begin{ex} {(Modified one-way permutation game):}
\else
\begin{ex} {\bf (Modified one-way permutation game):}
\fi
\label{ex:modified-OWP-game}
\begin{enumerate}
\item $P_1$ chooses some $x\in\zon$, and sends $f(x)$.
\item $P_2$ sends a message $z\in\zon$.
\item If exactly one of $P_1$ and $P_2$ send message 0, both players get payoff $-2$. If both players
    send message 0, both players get payoff $+2$. Otherwise, $P_2$ wins (with payoff $+1$) if and
    only if $z=x$, and the non-winning player loses (with payoff $-1$).
\end{enumerate}
\end{ex}
Using a similar argument to the one applied in Section~\ref{CNE.sec}, it can be shown that the strategies in which both players choose a message uniformly at random from $\zon\setminus\{0\}$ satisfy the above ``probabilistic'' variant of CSPE. However, this equilibrium does not match our intuitive understanding of SPE: $P_1$ will prefer to send message 0 regardless of $P_2$'s strategy, knowing that $P_2$ will then respond with 0 as well. The threat of playing uniformly from all other messages is {empty}, and hence should not be admitted by the definition.\footnote{We note that a simple change to the payoffs yields a game whose empty threat is more ``typical'': For the case in which
both players send message 0, let $P_2$'s payoff be $-3/2$.}

The examples above are rather simple, so it is
reasonable to expect that issues arising in their analyses are inherent in many other cryptographic protocols.
This raises the question of whether a computational variant of
SPE is at all attainable in a cryptographic setting.

At the heart of this question is the fact that essentially any cryptographic protocol carries some small (but positive) probability of being broken. This means that, while there may be a polynomial-time TM that can ``perform well" on the {\em average} message history, there is no single TM that will do better than {\em all} other TMs on every history (as for any history there exists some TM that has the corresponding ``secret information" hardwired).

This state of affairs calls for an alternative approach. While such an approach should be meaningful enough to express strategic considerations in an interactive setting, it should also be sufficiently weak to be realizable. As demonstrated above, any approach for tackling this challenge should explicitly address the associated probability of error. It should also take asymptotics into consideration.

\section{Our Results}

We propose a new framework for guaranteeing sequential rationality in a computational setting. Our starting point is a weakening of subgame perfection, called {\em threat-free Nash equilibrium}, that is more permissive, but still eliminates the undesirable empty threats of non-sequential solution concepts.

To cast our new solution concept into the computational setting, we develop a methodology that enables us to ``translate" arguments that involve computational infeasibility into a purely game theoretic language. This translation enables us to argue about game theoretic concepts directly, abstracting away complications that are related to computation.

In order to demonstrate the applicability of our framework, we revisit the problem of implementing a mediator for correlated equilibria~\cite{DHR00}, and propose a protocol that is sequentially rational for a non-trivial class of correlated equilibria (see Section~\ref{sec:applications}
for details).
\ifnum\draft=2
\else
Our treatment provides a better understanding of the conditions under which mediators in a correlated equilibrium can be replaced by a stable protocol.
\fi

\ifnum\draft=2
We emphasize that this version of the paper is a summary of  \cite{GLR10a},
and we strongly recommend that the interested reader turn to the latter for proofs and more elaborate discussions.
\fi

\subsection{Threat-Free Nash Equilibria}

We introduce {\em threat-free Nash equilibria} (TFNE), a weakening of subgame perfection whose objective is to capture strategic considerations in an interactive setting. Loosely speaking, a pair of strategies in an extensive  game is a TFNE if it is a NE, and if in addition no player is facing an empty threat at any history.

The problem of empty threats is the following: in a NE of an extensive game, it is possible that a player plays sub-optimally at a history that is reached with probability 0. The other player may strategically choose to deviate from his prescribed strategy and arrive at that history, knowing that this will cause the first player to play an optimal response rather than the prescribed one. In an SPE this problem is eliminated by requiring that no player can play sub-optimally at any history, and so no other player will strategically deviate and take advantage of this.

The main observation leading to the definition of TFNE is that the above requirement may be too strong a condition to eliminate such instability: if an optimal response of a player {\em decreases} the utility of the other, then this other player would not want to strategically deviate. By explicitly ruling out this possibility, the instability caused by empty threats is eliminated, despite the equilibrium notion being more permissive than subgame perfection.

To make this precise, we give the first formal definition of an empty threat in extensive games. The definition is regressive: Roughly speaking, a player $i$ is facing a threat at a history if there is some deviation at that history, along with a threat-free continuation from that history onwards, so that $i$ increases his overall expected payoff when the players play this new deviation and continuation.

We note that the notion of TFNE is strong enough to eliminate the undesirable strategy of playing randomly in the modified OWP game from Example~\ref{ex:modified-OWP-game} --
\ifnum\draft=2
in the full version \cite{GLR10a} we show that in any computational TFNE of this game
the second player outputs 0 after history 0.
\else
Claim~\ref{claim:owp-tfne} shows that in any computational TFNE of this game
the second player outputs 0 after history 0.
\fi

\subsection{Strategy-Filters and Tractable Strategies}

To cast the definition of TFNE into a computational setting, we map the given protocol into a sequence of extensive games using {\em strategy-filters} that map computable strategies into their ``strategic representation" (the strategic representation corresponds to the strategy effectively played by a given interactive Turing machine). We can then apply pure game theoretic solution concepts, and in particular our newly introduced concept of TFNE, to understand the strategic behavior of players.

Similarly to the definition of CNE, the computational treatment departs from the traditional game theoretic treatment in two crucial ways. First of all, our definition is framed {\em asymptotically} (in order to capture computational infeasibility), whereas traditional game-theory is framed for finitely sized games. Second, it allows for a certain {\em error probability}. This is an artifact of the (typically negligible) probability with which the security of essentially any cryptographic scheme can be broken.

Given a cryptographic protocol, we consider a corresponding sequence of extensive games. The sequence is indexed by a security parameter $k$ and an error parameter $\eps$. For each game, we ``constrain'' the strategies available to players to be a subset of those that can be generated by PPT players in the protocol. Intuitively, the game indexed by $(k,\eps)$ contains those strategies that run in time polynomial in $k$ and ``break crypto" with probability at most $\eps$. We also require that strategy-filters be {\em PPT-covering}: that for any polynomially-small
$\eps$, every PPT is eventually a legal strategy, far enough into the sequence of extensive games.

Using this framework we formalize
\ifnum\draft=2
\else
the notion of a
\fi
computational threat-free Nash equilibrium (CTFNE).  To the best of our knowledge this is the first attempt at analyzing sequential strategic reasoning in the presence of
computational infeasibility.

\subsection{Applications}
\label{sec:applications}

Our treatment provides a powerful tool for arguing about the strategic behavior of players in a cryptographic protocol. It also enables us to isolate sequential strategic considerations that are suitable for use in cryptographic protocols (so that the solution concept is not too weak and not too strong).

\ifnum\draft=2
We
\else
As a warm up, we demonstrate the applicability of our framework and solution concept to the ``coin-flipping game'' that corresponds to Blum's coin-flipping protocol~\cite{Bl}. One may view this as playing the classic game of match pennies without simultaneity (but with cryptography).
We show that it is possible to exploit the specific structure of the game to implement a correlating device resulting in a CTFNE. This is in contrast to the general approach of \cite{DHR00} that only enables one to argue CNE. This result already demonstrates the added strength of our framework and definition.

We then
\fi revisit the general problem of implementing a mediator for correlated equilibria~\cite{DHR00}, and propose a protocol that is sequentially rational for a non-trivial class of correlated equilibria. In particular, our protocol is in a CTFNE for correlated equilibria that are convex combinations of Nash equilibria and that are ``undominated'': There does not exist any convex combination of Nash equilibria
for which both players get a strictly higher expected payoff.

Our treatment explores the conditions under which mediators in a correlated equilibrium can be replaced by a stable protocol, and sheds light on some structural properties of such equilibria.

Finally, we prove a general theorem that identifies sufficient conditions for a TFNE in extensive games. Namely, we show that if an
undominated NE has the additional property that no player can harm the other by a unilateral deviation, then that NE must also be threat-free.

\subsection{Related Work}
This paper contributes to the growing literature on rational cryptography. Many of the papers in this line of research,
such as
\cite{DHR00, HT04, IML05, ADGH06, GK06, LMS05, LT06, K08, KN08a, KN08b, KFN10, OPRV09, MS09, AL09, Gra10}, explore various solution concepts
for cryptographic protocols viewed as games (often in the context of rational secret-sharing).
Aside from the works of Lepinski et al.\ \cite{IML05,LMS05}, Ong et al.\ \cite{OPRV09}, and Gradwohl \cite{Gra10}, who work
in a different model\footnote{More specifically, \cite{IML05,LMS05} make strong physical assumptions,
 \cite{OPRV09} assume the existence of a fraction of honest (non-rational) players, and \cite{OPRV09, Gra10} work in an information-theoretic setting.},
all prior literature has considered solution concepts that are non-sequential. More specifically, they all use
variants of NE such as strict NE, NE with stability to trembles, and everlasting equilibrium.

An additional related work is that of Halpern and Pass \cite{HP10}, in which the authors present a general framework for game theory
in a setting with computational cost. While their approach to computational limitations is more general than ours, they
only address NE. Finally,  Fortnow and Santhanam \cite{FS10} study a different framework for games with computational limits,
but also only in the context of NE.

\subsection{Future Work}

One potential application of our new definition is an analysis of rational secret-sharing protocols.
\ifnum\draft=2
For some ideas about why known gradual release protocols satisfy a solution concept that is related to but slightly weaker than
CTFNE, see the full version of this paper \cite{GLR10a}.
\else
While the design of such a protocol that is in a CTFNE is not within the scope of the current paper, we do provide some intuition about why known gradual release protocols satisfy a slightly weaker solution concept. Consider the following simple setting: each of two players knows a bit, and the XOR of the two bits is the secret. Secret exchange protocols, for example~\cite{LMR83}, allow the players to exchange their respective bits and thus learn the secret in such a way that even if one of the players cheats, he can reconstruct the secret with probability at most $\eps$ more than the other player.
Then under the assumptions on players' utilities used by~\cite{KFN10}, any unilateral deviation from this protocol can get the deviating player an increase of only $O(\eps)$ in utility. However, since the other player can always correctly guess the secret with almost the same probability (up to the additive $\eps$), the potential benefit to a player of deviating, causing the other to deviate, and so on, is also at most $O(\eps)$. Thus, this protocol is in a computational variant of $\eps$-NE and is also $\eps$-threat-free.
The reason this is weaker than our current solution concept is that we require the benefit from a threat or a deviation to be negligible, whereas in~\cite{LMR83} the $\eps$ is polynomially-small (in the number of rounds of the protocol).
\fi

There are numerous other compelling problems left for future work. The first problem is to extend our definition to games with simultaneous moves. While we do offer a partial extension tailored to the problem of implementing a mediator, the problem of defining CTFNE for general games with simultaneous moves is open. Such a definition would be particularly useful for a sequential analysis of protocols with a simultaneous channel. Another natural extension of the definition is to multiple players, as opposed to 2. Such an extension comes with its own challenges, particularly with regard to the possibility of collusion. A third extension is to incorporate the threat-freeness property with stronger variants of NE, such as stability with respect to trembles, strict NE, or survival of iterated elimination of dominated strategies. Finally, we would like to find more applications for our definition. One particularly interesting problem is to extend our results on the implementation of mediators to a larger class of correlated equilibria.

\section{Game Theory Definitions}
\label{sec:defs-gt}

\subsection{Extensive Games}
Informally, a game in extensive form can be described as a game tree in which each node is owned
by some player and edges are labeled by legal actions. The game begins at the root,
and at each step follows the edge labeled by the action
chosen by the current node's owner. Utilities of players are given at the leaves of the tree. More formally, we have the following standard definition of extensive games (see, for example, Osborne and Rubinstein \cite{OR94}):

\begin{procdef}{Extensive game}
A 2-person {\em extensive game}  is a tuple $\Gamma=(H,P,A,u)$ where
\label{def:extensive}
\begin{itemize}
\item $H$ is a set of (finite) {\em history} sequences such that
the empty word $\epsilon\in H$. A history $h\in H$ is {\em terminal} if $\{a\,:\,(h,a)\in H\}=\emptyset$.
The set of terminal histories is denoted $Z$.
\item $P: (H\setminus Z)\rightarrow \{1,2\}$ is a function that assigns a ``next" player to every non-terminal history.
\item $A$ is a function that, for every non-terminal history $h\in H\setminus Z$, assigns
a finite set $A(h)= \{a\,:\,(h,a)\in H\}$ of available actions to player $P(h)$.
\item $u=(u_1,u_2)$ is a pair of payoff functions $u_i:Z\mapsto \mathbb{R}$.
\end{itemize}
\end{procdef}

We will denote the two players by $P_1$ and $P_2$ and by $P_i$ and $P_{-i}$, where $i\in\{1,2\}$ and
$-i$ is shorthand for $2-i$.

\begin{procdef}{Behavioral strategy}
{\em Behavioral strategies} of players in an extensive game are
collections $\sigma_i=\left(\sigma_i(h)\right)_{h:P(h)=i}$ of independent
probability measures,
where $\sigma_i(h)$
is a probability measure over $A(h)$.
\end{procdef}

For any extensive game $\Gamma=(H,P,A,u)$, any player $i$, and any history $h$
satisfying $P(h)=i$, we denote by $\Sigma_i(h)$ the set of all probability
measures over $A(h)$. We denote by $\Sigma_i$ the set of all strategies
$\sigma_i$ of player $i$ in $\Gamma$.
For each {\em profile} $\sigma=(\sigma_1,\sigma_2)$ of strategies, define the
{\em outcome} $O(\sigma)$ to be the probability distribution over terminal histories that results when
each player $i$ follows strategy $\sigma_i$.
\ifnum\draft=2
\else
Note that if both $\sigma_1$ and $\sigma_2$ are deterministic (i.e.\ deterministic on every history), then so is
the outcome $O(\sigma)$.
\fi

\subsection{Nash Equilibrium}

Each profile of strategies yields a distribution over outcomes, and we are interested in profiles
that guarantee the players some sort of optimal outcomes. There are many solution concepts that
capture various meanings of ``optimal,'' and one of the most basic is the Nash equilibrium (NE).

\begin{procdef}{Nash equilibrium (NE)}
An {\em $\eps$-Nash equilibrium} of an extensive game $\Gamma=(H,P,A,u)$ is a profile
$\sigma^*$ of strategies such that for each player $i$,
$$\E\left[u_i\left(O(\sigma^*)\right)\right] \geq \E\left[u_i\left(O(\sigma^*_{-i},\sigma_i)\right)\right]-\eps$$ for
every strategy $\sigma_i$ of player $i$. It is a {\em NE} if the above holds for $\eps\leq 0$
and a {\em strict NE} if it holds for some $\eps<0$.
\end{procdef}

One of the premises behind the stability of profiles that are in an $\eps$-NE is that players will not bother
to deviate for a mere gain of $\eps$. For applications in cryptography we will generally have $\eps$ be some negligible function,
and this corresponds to our understanding that we do not care about negligible gains.

\ifnum\draft=2
\else
\subsection{Subgame Perfect Equilibrium}

One of the problems with NE in extensive games is the presence of empty threats: a player's equilibrium strategy may specify a sub-optimal
strategy at a history that is reached with probability 0. The other player, knowing this, may strategically deviate to reach that history, predicting that the first
player will also deviate. For more details and explicit examples see any textbook on game theory, such as \cite{OR94}.

The most basic solution to the problem of empty threats is to refine the NE solution, and require a strategy profile to be in a NE at
every history in the game. This results in a profile that is in {\em subgame perfect equilibrium} (SPE).

\begin{definition}[Subgames of extensive game]
\label{def:subgame}
For any 2-person extensive game $\Gamma=(H,P,A,u)$  and any non-terminal history $h\in H$, the subgame
$\Gamma|_h$ is the 2-person extensive game $\Gamma|_h =(H|_h,P|_h,A|_h,u|_h)$, where
\begin{itemize}
\item $h'\in H|_h$ if and only if $h\circ h'\in H$,

\item $P|_h(h')=P(h\circ h')$,
\item $A|_h(h')=A(h\circ h')$, and
\item $u_i|_h(h')=u_i(h\circ h')$.
\end{itemize}
\end{definition}

For each profile $\sigma=(\sigma_1,\sigma_2)$ of strategies and history $h\in H$, define the
{\em conditional outcome} $O(\sigma)|_h$ to be the probability distribution over terminal histories that results when
the game starts at a history $h$, and from that point onwards each player $i$ follows strategy $\sigma_i$.

\begin{definition}[Subgame perfect equilibrium (SPE)]
\label{def:spe} An {\em $\eps$-subgame perfect equilibrium} of an
extensive game $\Gamma=(H,P,A,u)$ is a profile $\sigma^*$ of
strategies such that for each player $i$ and each non-terminal
history $h\in H$,
$$\E\left[u_i\left(O(\sigma^*)|_{h}\right)\right] \geq \E\left[u_i\left(O(\sigma^*_{-i},\sigma_i)|_{h}\right)\right]-\eps$$ for
every strategy $\sigma_i$ of player $i$. It is an {\em SPE} if the
above holds for $\eps=0$ and a {\em strict SPE} if it holds for some
$\eps<0$.
\end{definition}

\fi

\subsection{Constrained Games}
\label{sec:constrained-games}

In the standard game theory literature, where there are no
computational constraints on the players, the available strategies
$\sigma_i$ of player $i$ are all possible collections
$\left(\sigma_i(h)\right)_{h:P(h)=i}$, where $\sigma_i(h)$ is an
arbitrary distribution over $A(h)$.
In our setting, however, we will only consider strategies that can
be implemented by computationally bounded ITMs. This requires being
able to constrain players' strategies to a strict subset of the
possible strategies.
\ifnum\draft=2
\else
One natural way to restrict the strategies is to allow only a subset
of all distributions over $A(h)$ at each history $h$. However, this
does not enable us to capture more elaborate restrictions, and
specifically ones that might result from requiring strategies to be
implementable by polynomial time ITMs. (For example, a player
might have for every possible history a strategy that plays best
response on that history, but no strategy that plays best response
on {\em all} histories.) To capture these more elaborate
restrictions, we consider player $i$ strategies that are restricted
to an arbitrary subset $T_i$ of all possible (mixed) strategies.

\fi
Given a pair $T=(T_1,T_2)$ of such sets we can then define a constrained version of a game, in which only strategies that belong to these sets are considered.

\begin{procdef}{Constrained game}
\label{def:constrained-game} Let $\Gamma=(H,P,A,u)$ be an extensive game and let $T=(T_1,T_2)$, where $T_i\subseteq \bigotimes_{h:P(h)=i}\Sigma_i(h)$ for each $i\in \{1,2\}$. The {\em $T$-constrained version} of $\Gamma$ is the game in which the only allowed strategies for player $i$ belong to $T_i$.
\end{procdef}

\ifnum\draft=2
This definition enables us to capture restrictions that might result from requiring
strategies to be implementable by polynomial time ITMs.
\fi
NE of constrained games are defined similarly to regular NE, except that players' strategies
and deviations must be from the constraint sets.

\ifnum\draft=2
\else
\begin{definition}[NE in constrained games]
An {\em $\eps$-Nash equilibrium} of a $(T_1,T_2)$-constrained version of an extensive game $\Gamma=(H,P,A,u)$ is a profile
$\sigma^*\in (T_1,T_2)$ of strategies such that for each player $i$,
$$\E\left[u_i\left(O(\sigma^*)\right)\right] \geq \E\left[u_i\left(O(\sigma^*_{-i},\sigma_i)\right)\right]-\eps$$ for
every strategy $\sigma_i\in T_i$ of player $i$. It is a {\em NE} if the above holds for $\eps\leq 0$
and a {\em strict NE} if it holds for some $\eps<0$.
\end{definition}
\fi

\section{Threat-Free Nash Equilibrium}
\label{sec:tfne}

Our starting point is the inadequacy of subgame perfection in capturing sequential rationality in a computational context. As argued in Section~\ref{CSPE.sec}, it is unreasonable to require computationally-bounded players to play optimally at every node of a game. In particular, in cryptographic settings this requires breaking the security of the protocol, which is assumed impossible under the computational constraints.

A possible idea might be to require that players ``play optimally at
every node of the game, {under their computational
constraints}." However, this idea cannot be interpreted in a
sensible way. Computational constraints must be
defined ``globally," and thus the notion of
playing optimally under some computational constraint {on a
particular history} is senseless. In particular, for any
history of some cryptographic protocol, there is a small machine
that plays optimally on this specific history {\em unconditionally}
(and breaks ``cryptographic challenges" appearing in this
history, by having the solutions hardwired). This machine is
efficient, and so meets essentially any computational constraint.
So, while under computational constraints every machine fails on
cryptographic challenges in most histories, for every history there
is a machine that succeeds. We thus assume that a player chooses his machine before the game starts, and cannot change his machine~later.

\subsection{A New Solution Concept}

In light of the above discussion, it seems like the solution concept we are looking for has to reconcile the following seemingly conflicting properties:
\begin{enumerate}
\item It implies an optimal strategy for the players {\em under their computational constraints},
which implies {\em non-optimal} play on certain histories.
\item It does not allow empty threats, thus implying ``sequential rationality."
\end{enumerate}

The crucial observation behind our definition is that in order to rule out empty threats, one does not necessarily need to require
that players play optimally at {\em every} node, because not every
non-optimal play carries a threat to other players. In fact,
in a typical cryptographic protocol, the security of each player
is {\em building} on other players not playing
optimally (because playing optimally would mean breaking the
security of the protocol). Thus, a player's ``declaration" to play
non-optimally does not necessarily carry a threat: the other players
may even gain from it. More generally, even in non-cryptographic
protocols, at least in 2-player perfect information games, we can
use the following observation: in any
computational challenge, either a player gains from the other not
playing optimally, or, if he does not gain, he can avoid introducing
that computational challenge to the other player.\footnote{This is
indeed an informal statement. In fact, we should add the disclaimer
that computational hardness for one player does not necessarily have
to stem from the strategy of the other. For example, the utility
function may be computationally hard.}

Following the above observation, we introduce a new solution
concept for extensive games. The new solution concept requires that players be in NE, and moreover, that no player impose an empty threat on the other. At the same time, it does not require players to play optimally at every node. In other words, players may (declare to) play non-optimally on non-equilibrium support, yet this declaration of non-optimal play does not carry an empty threat. We call our new solution concept TFNE, for threat-free Nash equilibrium.

To make the above precise, we introduce a formal definition of an empty threat. An empty threat occurs when a player threatens to play ``non-rationally" on some history in order to coerce the other player to avoid this history. Crucially, empty threats are such that,
had the threatened \textit{not} believed the threat, had he deviated accordingly, and had the threatening player played ``rationally,'' the
threatened player would have benefitted. To rephrase our intuition: a player faces an empty threat with respect to some
strategy profile if by deviating from his prescribed strategy, and
having the other player react ``rationally," he improves his payoff (in comparison with sticking to the prescribed
strategy and having the other player react ``rationally'' from then on).

But what does it mean for the other player to react ``rationally"? The
other player may assume, recursively, that the first player will
play a best response, and will not carry out empty threats against him,
and so on, leading to a regressive definition.

\subsection{Vanilla Version}

Before giving the general definition of TFNE that we will use, we present a simpler version that has
no slackness parameter and that works for games without constrained strategies.

For a player $i$ and a history $h$, two strategies $\sigma_i$ and $\pi_i$ are {\em equivalent for player $i$ on $h$} if $P(h)=i$ and $\sigma_i(h)=\pi_i(h)$, or $P(h)\neq i$. Two strategies {\em differ
only on the subgame $h$} if they are equivalent on every non-terminal history that does not have $h$ as a prefix. Formally, they are equivalent on every
history in $H\setminus \{h'\in H:h'=h\circ h''\mbox{ for some
}h''\}$. For a history $h\in H$, a strategy $\sigma$, and a distribution $\tau=\tau(h)$ on $A(h)$,
\ifnum\draft=2
define the set $\mathrm{Cont}(h,\sigma,\tau)$ as
$$\Big\{\pi:
(\pi\; \mathrm{differs\; from}\; \sigma \;\mathrm{only \; on \;} h)\; \& \; (\pi(h)=\tau(h))\Big\}.$$
\else
let
\begin{align*}\mathrm{Cont}(h,\sigma,\tau)\eqdef\Big\{\pi:
(\pi\; \mathrm{differs\; from}\; \sigma \;\mathrm{only \; on \; the \; subgame}\; h)\; \& \; (\pi(h)=\tau(h))\Big\}.
\end{align*}
\fi
We now proceed to define a threat. For simplicity, we will do so for generic games, in which each player's possible payoffs are distinct. For such games, the set $\mathrm{Cont}(h,\sigma,\tau)$ always contains exactly one ``threat-free" element (defined below).

\begin{procdef}{Threat}
\label{def:vanilla-threat}
Let $\Gamma=(H,P,A,u)$ be an extensive game with distinct payoffs. Let $\sigma$ be a strategy profile, and let $h\in H$. Player $i=P(h)$ is facing a {\em threat} at history $h$ with respect to $\sigma$ if there exists a distribution $\tau=\tau(h)$ over $A(h)$ such that the unique $\pi\in \mathrm{Cont}(h,\sigma,\tau)$ and $\pi'\in \mathrm{Cont}(h,\sigma,\sigma)$ that are threat-free on $h$ satisfy \be\nonumber \E\left[u_{i}\left(O(\pi)\right)\right]
>
\E\left[u_{i}\left(O(\pi')\right)\right],\ee
where strategy $\pi$ is {\rm threat-free} on $h$ if for {\em all} $h'\not=\epsilon$ satisfying $h\circ h' \in H$ player $P(h\circ h')$ is not facing a threat at $h\circ h'$ with respect to $\pi$.
\end{procdef}
Note that if $h$ is such that for all $a\in A(h)$ it holds that $h\circ a \in Z$, then any profile $\pi$ is threat free on $h$.

\begin{procdef}{Threat-free Nash equilibrium}
\label{def:tfne}
Let $\Gamma=(H,P,A,u)$ be an extensive game. A strategy profile $\sigma^*$ is a {\em threat-free Nash equilibrium (TFNE)} if:
\begin{enumerate}
\item $\sigma^*$ is a $NE$ of $\Gamma$, and
\item for any $h\in H$, player $P(h)$ is not facing a threat at history $h$ with respect to $\sigma^*$.
\end{enumerate}
\end{procdef}

Note that in every profile that is in a TFNE, the effective play matches some SPE profile (more precisely, there is an SPE profile that yields exactly the same
distribution on outcomes). This and other properties of threats and TFNE are formalized in the companion paper to this work \cite{GLR10b}.

In the definition of a threat we used the fact that $\mathrm{Cont}(h,\sigma,\tau)$ and $\mathrm{Cont}(h,\sigma,\sigma)$
each contain exactly one profile that is threat-free on $h$. To
show that this must be the case, we have the following proposition\ifnum\draft=2
~(see full version \cite{GLR10a} for a proof)\fi, which is not unlike the fact that generic games have unique subgame perfect equilibria.

\begin{proposition}
\label{prop:well-defined-vanilla}{For any extensive game $\Gamma=(H,P,A,u)$, strategy profile
$\sigma$, player $i$, history $h\in H\setminus Z$ with
$P(h)=i$, and distribution $\tau$ over $A(h)$, the set $\mathrm{Cont}(h,\sigma,\tau)$ contains exactly one
profile that is threat-free on $h$.}
\end{proposition}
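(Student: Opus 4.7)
The plan is to prove the proposition by induction on the height $\h{h}$ of the subtree rooted at $h$, since the regressive definition of threat-freeness on $h$ refers only to histories in the subtrees of $h$'s children, which have strictly smaller height.

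For the base case, $\h{h}=1$, every child of $h$ is terminal, so $h$ is the only non-terminal history in its subtree. A profile in $\mathrm{Cont}(h,\sigma,\tau)$ must agree with $\sigma$ outside the subtree and play $\tau$ at $h$, leaving no further freedom, so the set is a singleton whose unique element is vacuously threat-free on $h$ (there is no non-empty $h'$ making $h\circ h'$ non-terminal).

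For the inductive step, suppose the statement for all histories of height strictly less than $\h{h}$. For each $a\in A(h)$ with $h\circ a$ non-terminal, the hypothesis yields, for every distribution $\tau'$ on $A(h\circ a)$, a unique threat-free-on-$(h\circ a)$ profile in $\mathrm{Cont}(h\circ a,\sigma,\tau')$; because threat-freeness on $h\circ a$ is determined purely by behavior inside the subtree rooted at $h\circ a$, the restriction of that profile to the subtree depends only on $\tau'$, and will be denoted $R_a(\tau')$. Unfolding the definition, a candidate $\pi\in\mathrm{Cont}(h,\sigma,\tau)$ is threat-free on $h$ if and only if, for every non-terminal child $h\circ a$: (i) the restriction of $\pi$ to the subtree of $h\circ a$ equals $R_a(\pi(h\circ a))$, and (ii) $P(h\circ a)$ does not face a threat at $h\circ a$ with respect to $\pi$. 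Using (i) and the inductive hypothesis, condition (ii) reduces to the requirement that $\pi(h\circ a)$ be a distribution on $A(h\circ a)$ maximizing the expected payoff of $P(h\circ a)$ when continuation play in the subtree is given by $R_a(\cdot)$.

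The main obstacle is to deduce from the genericity assumption that this maximizer is unique. The payoff is linear in $\pi(h\circ a)$, so its maximizers form the convex hull of the pure actions that achieve the maximum, and uniqueness reduces to showing that the subtree-expected payoffs associated with distinct pure actions at $h\circ a$ are themselves distinct. The plan for this last step is a parallel induction showing that the canonical threat-free continuation from any non-terminal history is outcome-deterministic given any pure initial action, so that distinctness of leaf payoffs, guaranteed by genericity, propagates upward. Granting this, each $\pi(h\circ a)$ is pinned down uniquely, (i) then pins down $\pi$ on the entire subtree of $h\circ a$, and gathering over children yields a unique threat-free-on-$h$ profile in $\mathrm{Cont}(h,\sigma,\tau)$, closing the induction.
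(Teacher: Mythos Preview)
Your proposal is correct and follows essentially the same approach as the paper's proof: induction on the height of $h$, applying the inductive hypothesis at each child to obtain the unique threat-free continuation for each choice of distribution there, then selecting the distribution at each child to maximize that child's owner's payoff, with genericity forcing uniqueness. If anything you are more careful than the paper on one point: the paper simply asserts that distinct payoffs give a unique maximizing $\pi^j$ at each child, whereas you correctly flag that this relies on the auxiliary fact (your ``parallel induction'') that the threat-free continuation from a pure action is outcome-deterministic, so that distinct pure actions at a child yield distinct leaf payoffs.
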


\ifnum\draft=2
\else
\begin{proof}
For any history $h\in H\setminus Z$, let $\h{h}$ be the maximal distance between $h$ and a descendant of $h$ (i.e.\ the leaf that is furthest away from $h$ but lies on the subtree rooted by $h$). The proof of the proposition is by induction on $\h{h}$.

For the base case $\h{h}=1$, note that there is exactly one element in
$\mathrm{Cont}(h,\sigma,\tau)$ and that this profile is threat-free on $h$ (since $h$ is a last move of the game).

Next, suppose the claim of the proposition holds for all histories $h$ with $\h{h}<k$. We will prove that it holds for histories $h$
with $\h{h}=k$. To this end, fix such a history $h^0$, and suppose the children of $h^0$ in the game tree are $h^1,\ldots,h^t$.
Suppose also that $P(h^0)=i$ and $P(h^1)=\ldots=P(h^t)=-i$, and note that this is without loss of generality.

Consider the profile $\pi^0$ that is identical to $\sigma$ except at history $h$, and fix $\pi^0(h)=\tau(h)$. We
now repeat the following process in succession for each $j\in\{1,\ldots,t\}$:
For any such $j$, let
$$\mathrm{TF}({h^j})\eqdef\left\{ \pi\in\bigcup_{\tau^j} \mathrm{Cont}(h^j,\pi^{j-1},\tau^j):
\pi\mbox{ is threat-free on }h^j \right\}.$$
We then choose a profile $\pi^j\in\mathrm{TF}({h^j})$ that satisfies
\begin{align*}
u_{-i}\left(\pi^j\right) \geq u_{-i}\left(\pi''\right)
\end{align*}
for all $\pi''\in\mathrm{TF}({h^j})$.
Because payoffs for player $-i$ are distinct, it must be the case that there exists a unique maximal $\pi^j$. That is, there
can be no $\pi''$ that is different from $\pi^j$ and has the same payoff for player $-i$.

After doing this for all $h^j\in\{h^1,\ldots,h^t\}$ we have a profile $\pi^t$ that we claim is threat-free on $h$.
To see this, observe that for all $j\in\{1,\ldots,t\}$, $\pi^j$ is threat-free on $h^j$ because we chose it to be a
threat-free profile from $\mathrm{Cont}(h^j,\pi^{j-1},\tau'')$. However, since for each $j$ we chose a \textit{maximal}
$\tau^j$, there are no threats at the histories $h^j$ either. Finally, uniqueness of $\pi^j$ is guaranteed by the fact that
for each $j$, our choice of a maximal $\tau^j$ was unique.
\end{proof}
\fi

\subsection{Round-Parameterized Version}
\label{sec:round-parametrized-version}

For games induced by cryptographic protocols we will need a more general definition of TFNE. We assume that in these games
players alternate moves, and thus there is a natural notion of the ``rounds'' in the game: Player $i$ makes a move in round 1, then player $-i$
makes a move in round 2, and so on until the end of the game.

For the general definition, we introduce a few modifications to the vanilla version:
\begin{itemize}
  \item  We add a slackness parameter $\eps$. This is necessary for our applications in order to handle the probability of error inherent in almost all cryptographic protocols.
  \item We allow players to be threatened at rounds, rather than just specific histories. This is needed because when we add the slackness parameter, a player might be threatened at a set of histories, where the weight of each individual threat does not exceed the slackness parameter, but the overall weight does.
  \item Finally, for a player to be threatened, we
require that he improve on {\em all} threat-free continuations
$\pi$. The reason we need this is that in the general case,
there may be more than one $\pi$ that is threat-free. If a
player deviates from his prescribed behavior, he cannot choose {\em which} (threat-free) continuation will be played.
\end{itemize}

The definitions below make use of the notion of a round $R$ strategy of player $i$: This is simply
a function mapping every history $h$ that reaches round $R$ to a distribution over $A(h)$. For a round
$R\in \mathbb{N}$ we let $\sigma_i(R)$ represent player $i$'s round $R$ strategy implied by $\sigma$.
Let $\sigma(R)=(\sigma_1(R),\sigma_2(R))$, and let
\begin{align*}\mathrm{Cont}(\sigma(1),\ldots,\sigma(R))\eqdef\Big\{\pi\in T: \pi(S) = \sigma(S) ~\forall S\leq R\Big\},
\end{align*}
where $T=(T_1,T_2)$ consists of constraints for players' strategies.

\begin{procdef}{$\eps$-threat}
\label{def:threat}
Let $\Gamma=(H,P,A,u)$ be an extensive game with constraints
$T=(T_1,T_2)$. Let $\eps\geq 0$, let $\sigma\in T$ be a strategy
profile, and let $R\in\mathbb{N}$ be a round of $\Gamma$. Player $i=P(R)$ is facing an {\em $\eps$-threat} at round $R$
with respect to $\sigma$ if there exists a round $R$ strategy
$\tau=\tau(R)$ for player $i$ such that
\begin{itemize}
\item[(i)] the set
$\mathrm{Cont}(\sigma(1),\ldots,\sigma(R\!-\!1),\tau(R))$ is nonempty,
and \item[(ii)] for all $\pi\in
\mathrm{Cont}(\sigma(1),\ldots,\sigma(R\!-\!1),\tau(R))$ and $\pi'\in
\mathrm{Cont}(\sigma(1),\ldots,\sigma(R))$ that are $\eps$-threat-free
on $R$
\begin{eqnarray*} \E\left[u_{i}\left(O(\pi)\right)\right] >
\E\left[u_{i}\left(O(\pi')\right)\right]+\eps,\end{eqnarray*}
\end{itemize}
where strategy $\pi$ is {\em $\eps$-threat-free on  $R$} if for
{\em all} rounds $S>R$ it holds that player $P(S)$ is not facing an
$\eps$-threat at round $S$ with respect to $\pi$.

\end{procdef}
Note that if $R$ is the last round of the game, then any profile $\pi\in
T$ is $\eps$-threat-free on $R$. Using Definition~\ref{def:threat}, we can now define an $\eps$-TFNE.

\begin{procdef}{$\eps$-threat-free Nash equilibrium}
\label{def:eps-tfne}
Let $\Gamma=(H,P,A,u)$ be an extensive game with constraints $T=(T_1,T_2)$.
A strategy profile $\sigma^*\in T$ is an {\em $\eps$-threat-free Nash equilibrium ($\eps$-TFNE)} if:
\begin{enumerate}
\item $\sigma^*$ is an $\eps$-NE of $\Gamma$, and
\item for any round $R$ of $\Gamma$, player $P(R)$ is not facing an $\eps$-threat at round $R$ with respect to $\sigma^*$.
\end{enumerate}
\end{procdef}

As is the case for Definition~\ref{def:vanilla-threat}, Definition~\ref{def:threat} (and hence Definition~\ref{def:eps-tfne}) would not be (semantically) well-defined if either one of the sets $\mathrm{Cont}(\sigma(1),\ldots,\sigma(R\!-\!1),\tau(R))$ or $\mathrm{Cont}(\sigma(1),\ldots,\sigma(R))$ would not contain at least one profile $\pi$ that is $\eps$-threat-free on $R$. The following proposition
shows that this can never be the case.

\begin{proposition}
\label{prop:well-defined}{Let $\Gamma=(H,P,A,u)$ be an extensive game with constraints
$T=(T_1,T_2)$. Let $\eps\geq 0$, let $\sigma\in T$ be a strategy
profile, and let $R$ be a round of $\Gamma$. For any round $R$ strategy $\tau=\tau(R)$ for player $i=P(R)$,
if the set $\mathrm{Cont}(\sigma(1),\ldots,\sigma(R\!-\!1),\tau(R))$ is nonempty then it contains at least one profile $\pi$ that is $\eps$-threat-free on $R$.}
\end{proposition}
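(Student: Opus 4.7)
The plan is to proceed by induction on the number of rounds of $\Gamma$ that lie strictly after round $R$. Since $\Gamma$ has finitely many rounds, finitely many actions per history, and bounded payoffs, this induction is well-founded and every supremum appearing in the argument will be a finite real number. For the base case, when $R$ is the last round of $\Gamma$, being ``$\eps$-threat-free on $R$'' holds vacuously (the quantifier ``for all rounds $S>R$'' ranges over the empty set), so any $\pi\in\mathrm{Cont}(\sigma(1),\ldots,\sigma(R-1),\tau(R))$ works.

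For the inductive step I would fix $j=P(R+1)$ and choose $\pi$'s round-$(R+1)$ strategy to be approximately optimal for $j$, the intuition being that if $\pi(R+1)$ is near-best for $j$ among all choices that extend to an $\eps$-threat-free continuation, then $j$ cannot strictly improve by deviating at round $R+1$. Concretely, for each round-$(R+1)$ strategy $\tau'$ with $\mathrm{Cont}(\sigma(1),\ldots,\sigma(R-1),\tau(R),\tau'(R+1))$ nonempty, the inductive hypothesis applied at round $R+1$ yields a nonempty set $F(\tau')$ of profiles in that continuation that are $\eps$-threat-free on $R+1$. I would then set $V(\tau')=\sup_{\rho\in F(\tau')}\E[u_j(O(\rho))]$ and $V^*=\sup_{\tau'} V(\tau')$, pick $\tau'_*$ with $V(\tau'_*)\geq V^*-\eps/3$, and pick $\pi\in F(\tau'_*)$ with $\E[u_j(O(\pi))]\geq V(\tau'_*)-\eps/3$, so that $\E[u_j(O(\pi))]\geq V^*-2\eps/3$.

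To conclude, I need to verify that this $\pi$ is $\eps$-threat-free on $R$. It already lies in $\mathrm{Cont}(\sigma(1),\ldots,\sigma(R-1),\tau(R))$ and is $\eps$-threat-free on every round $S>R+1$ by construction, so only the potential threat at round $R+1$ itself must be ruled out. For any alternative round-$(R+1)$ strategy $\tau''$ with nonempty continuation, the inductive hypothesis hands me an $\eps$-threat-free $\pi^A\in F(\tau'')$, and the near-optimality of $\pi$ yields $\E[u_j(O(\pi^A))]\leq V(\tau'')\leq V^*\leq \E[u_j(O(\pi))]+2\eps/3$. Taking $\pi^B=\pi$ (which is $\eps$-threat-free on $R+1$) in Definition~\ref{def:threat}, this falsifies the strict inequality a threat would demand, so $j$ faces no $\eps$-threat at $R+1$ w.r.t.\ $\pi$, completing the inductive step.

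The main obstacle I anticipate, beyond carefully tracking the nested quantifiers in Definition~\ref{def:threat} (``exists $\tau$; for-all $\pi,\pi'$''), is that the suprema $V(\tau')$ and $V^*$ need not be attained by any particular allowed profile. I sidestep this by spending $\eps/3$ of slack at each of the two selection steps, consuming only $2\eps/3<\eps$ in total — precisely the slack that the strict inequality in Definition~\ref{def:threat} permits. A secondary subtlety is ensuring the constraint set $T$ is respected throughout: since $F(\tau')\subseteq T$ by the inductive hypothesis (which is stated relative to the same constraints $T$), this is automatic.
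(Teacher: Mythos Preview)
Your proposal is correct and follows essentially the same argument as the paper: induct on the number of rounds remaining after $R$, and at round $R+1$ pick a profile that is $\eps$-threat-free on $R+1$ and (approximately) maximizes player $P(R+1)$'s payoff over all such profiles, so that no $\eps$-threat can arise at $R+1$. Your split of the slack into $\eps/3+\eps/3$ is more conservative than needed—the paper simply picks $\pi$ within $\eps$ of the supremum, which already suffices because the inequality in Definition~\ref{def:threat} is strict—but the logic is identical.
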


\ifnum\draft=2
\else
\begin{proof}
For any round $R$ of $\Gamma$,  let $\h{R}$ be the distance between $h$ and the last round of $\Gamma$. The proof of the proposition is by induction on $\h{R}$.

For the base case $\h{R}=0$, note that, by the hypothesis of the proposition, the set $\mathrm{Cont}(\sigma(1),\ldots,\sigma(R\!-\!1),\tau(R))$ is nonempty. Since $R$ is the last round of the game, the set
contains exactly one profile, $(\sigma(1),\ldots,\sigma(R\!-\!1),\tau(R))$, and this profile is vacuously $\eps$-threat-free on $R$.

Next, suppose the claim of the proposition holds for all rounds $R$ with $\h{R}<k$. We will prove that it holds for round $R$
satisfying $\h{R}=k$. Let $i=P(R)$, and assume that there exists some $\pi'\in\mathrm{Cont}(\sigma(1),\ldots,\sigma(R\!-\!1),\tau(R))$.
We would like to show that $\mathrm{Cont}(\sigma(1),\ldots,\sigma(R\!-\!1),\tau(R))$ contains at least one profile $\pi$ that is $\eps$-threat-free on $R$.

By the inductive hypothesis we have that, for any round $R+1$ strategy $\tau'$ of player $-i$, if the set $\mathrm{Cont}(\sigma(1),\ldots,\sigma(R-1),\tau(R),\tau'(R+~1))$ is nonempty then it contains at least one profile that is $\eps$-threat-free on $R+1$  (since $\h{R+1}<k$). We will choose
a profile that has a \textit{maximal} $\tau'$ as follows.
Let
$$\mathrm{TF}({R+1})\eqdef\left\{ \pi\in\bigcup_{\tau'}\mathrm{Cont}(\sigma(1),\ldots,\sigma(R-1),\tau(R),\tau'(R+1)):
\pi\mbox{ is }\eps\mbox{-threat-free on }R+1 \right\},$$
and note that $\mathrm{TF}({R+1})$ must be nonempty. This is because there always exists at least one $\tau'$ for which $\mathrm{Cont}(\sigma(1),\ldots,\sigma(R-1),\tau(R),\tau'(R+1))$ is nonempty: namely,
we could have $\tau'(R+1)=\pi'(R+1)$. Since $\mathrm{Cont}(\sigma(1),\ldots,\sigma(R-1),\tau(R),\pi'(R+1))$ is nonempty by assumption, it must contain a profile that is $\eps$-threat-free on $R+1$
(by the inductive hypothesis).

We now choose a profile $\pi\in\mathrm{TF}({R+1})$ that satisfies
\begin{align*}
u_{-i}\left(\pi\right) \geq u_{-i}\left(\pi''\right)-\eps
\end{align*}
for all $\pi''\in\mathrm{TF}({R+1})$.
So now we have a profile $\pi \in \mathrm{Cont}(\sigma(1),\ldots,\sigma(R\!-\!1),\tau(R))$, which we claim is
$\eps$-threat-free on round $R$. To see this, note that $\pi$ is $\eps$-threat-free on $R+1$ by the way we chose it (i.e.\ a profile from $\mathrm{Cont}(\sigma(1),\ldots,\sigma(R-1),\tau(R),\tau'(R+1))$ that
is $\eps$-threat-free on $R+1$). However, since we chose a \textit{maximal} $\tau'$ (up to $\eps$), there is no $\eps$-threat at round  $R+1$ either. Thus $\pi$ is $\eps$-threat-free on $R$.
\end{proof}
\fi

\section{The Computational Setting}

In the following we explain how to use the notion of TFNE for cryptographic protocols. In Section \ref{sec:defs-crypto} we describe how to view a cryptographic protocol as a sequence of extensive games. In Section \ref{ITM} we show how to translate the behavior of an interactive TM to a sequence of strategies. In Section \ref{express} we show how to express computational hardness in a game-theoretic setting. Finally, in Section \ref{sec:comp-tfne} we give our definition of computational TFNE.

\subsection{Protocols as Sequences of Games}
\label{sec:defs-crypto}

When placing cryptographic protocols in the framework of
extensive games, the possible messages of players in a protocol
correspond to the available actions in the game tree, and the
prescribed instructions correspond to a strategy in the game.

The protocol is parameterized by a security parameter $k\in
\mathbb{N}$. The set of possible messages in the protocol, as well
as its prescribed instructions, typically depend on this $k$.
Assigning for each $k$ and each party a payoff for every outcome, a
protocol naturally induces a sequence
$\Gamma^{(k)}=(H^{(k)},P^{(k)},A^{(k)},u^{(k)})$ of extensive
games, where:
\begin{itemize}
\item $H^{(k)}$ is the set of possible {\em transcripts} of the protocol (sequences of messages exchanged between the parties). A history $h\in H^{(k)}$ is {\em terminal} if the prescribed instructions of the protocol instruct the player whose turn it is to play next to halt on input $h$. 
\item $P^{(k)}: (H^{(k)}\setminus Z^{(k)})\rightarrow \{1,2\}$ is a function that assigns a ``next" player to every non-terminal history.
\item $A^{(k)}$ is a function that assigns to every non-terminal history $h\in H^{(k)}\setminus Z^{(k)}$ a set $A^{(k)}(h)= \{m\,:\,(h \circ m)\in H^{(k)}\}$ of possible protocol messages to player $P^{(k)}(h)$.\footnote{We can interpret ``disallowed" messages in the protocol as abort, and define ``abort" as a possible protocol message. This will imply that every execution of the protocol corresponds to some history in the game.}

\item $u^{(k)}=(u_1^{(k)},u^{(k)}_2)$ is a vector of payoff functions
$u^{(k)}_i:Z^{(k)}\rightarrow \mathbb{R}$.
\end{itemize}
A sequence $\Gamma=\{\Gamma^{(k)}\}_{k \in \mathbb{N}}$ of games defined as above is referred to as a {\em computational game}.

\begin{remark}
In the following we will consider games played by Turing machines. Thus, actions will be represented by strings. As opposed to traditional game theory, where players are computationally unbounded, in our case the names of the actions will be significant. For example, in the One-way Permutation Game, if we encode player 1's action $f(x)$ by the string $x$ for every $x \in \zok$, then inverting the one-way permutation becomes easy for player 2. However, to avoid too much notation, we will identify actions with their string representation. The reader should keep in mind, however, that actions are always strings, and that changing the string representation of actions might be {\em with} loss of generality.
\end{remark}

\subsection{Strategic Representation of Interactive Machines} \label{ITM}

Protocols are defined in terms of {\em interactive Turing machines} (ITMs) -- see \cite{GFoC} for a formal definition. More specifically, the prescribed behavior for each player is defined via an ITM, and any possible deviation of this player corresponds to choosing a different ITM. In order to argue about the protocol in a game-theoretic manner we formalize, using game-theoretic notions, the strategic behavior implied by ITMs. We believe this formalization is necessary for our treatment or any game-theoretic analysis of ITMs, in particular because, to the best of our knowledge, it has never been done before.
\ifnum\draft=2
The full formalization is deferred to \cite{GLR10a}, and has the following (informally stated) conclusion:
\else
However, because this section somewhat departs from the main thrust of the paper, the reader may skip to Section~\ref{express},
keeping the following (informally stated) conclusion in mind:
\fi
 The strategic behavior of an ITM for player $i$ in a protocol may be seen as a collection of independent distributions on actions, one for each of player $i$'s histories that are reached with positive probability given the ITM of player $i$ and some strategy profile of the other players. We refer to this collection as the behavioral reduced strategy induced by the ITM.

\ifnum\draft=2
\else
When considering some computational game $\Gamma^{(k)}$ in a sequence $\Gamma=\{\Gamma^{(k)}\}_{k \in \mathbb{N}}$ and an ITM ``playing" this game (with input $1^k$), the machine does not, strictly speaking, define a strategy. Informally, the machine specifies how to play {\em only on histories that are not inconsistent with the specification on earlier histories in the game}. That is, an ITM for player $i$ specifies distributions on actions for all histories on which it is player $i$'s turn, except those it cannot reach based on its own specification on earlier histories. This is the case, because when fixing the other player's moves, the distribution on actions the machine plays on a history that cannot be reached is simply undefined, as we are conditioning on an event with probability $0$. In the following, we show that the prescribed behavior of an ITM can be seen as a convex combination of {\em reduced strategies} (which we call {\em mixed reduced strategy}), to be defined next. We then define the natural analogue of {\em behavioral reduced strategy}, and argue that for every mixed reduced strategy there exists a behavioral reduced strategy that is outcome-equivalent. We will eventually use  behavioral reduced strategies to describe the behavior induced by ITMs.

\begin{procdef}{Reduced strategy (adapted from \cite{OR94})}
Given a game $\Gamma=(H,P,A,u)$, a (pure) reduced strategy for player $i$ is a function $\sigma_i$ whose domain is a subset of $\{h\in H | P(h)=i\}$ with the following properties:
\begin{itemize}
  \item For every $h$ in the domain of $\sigma_i$ it holds that $\sigma_i(h) \in A(h)$.
  \item $h=(a_1,\dots,a_m)$ is in the domain of $\sigma_i$ if and only if for any $1 \leq \ell \leq m-1$ such that $P(a_1,\dots,a_\ell)=i$ it holds that $(a_1,\dots,a_\ell)$ is in the domain of $\sigma_i$ and $\sigma_i(a_1,\dots,a_\ell)=a_{\ell+1}$.
\end{itemize}
\end{procdef}

\begin{definition}[Mixed reduced strategy]
A {\em mixed reduced strategy} for player $i$ is a distribution over reduced strategies for player $i$.
\end{definition}

Given an ITM for $\Gamma^{(k)}$, for every instance of internal randomness for that machine (i.e., a vector of coins), the induced behavior of that ITM is exactly a reduced strategy. This is the case because for every profile of pure strategies (or reduced pure strategies) of the other players, the randomness naturally defines an action for every history that is consistent with its previous actions (the sequence of these actions, together with the profile, defines the outcome of the game), and on the other hand, naturally the randomness does not define an action for histories that are not consistent with that randomness (as with that randomness the machine will never reach these histories). It follows that an ITM defines a distribution over reduced (pure) strategies, i.e., a mixed reduced strategy. We now formalize this claim.

\begin{definition}[Induced mixed reduced strategy of an ITM]
Let $M$ be a probabilistic ITM for player $i$ in the extensive game $\Gamma$. Assume that $M$ halts for any infinite vector of coins and any sequence of messages sent by the other players, and let $t$ be a bound on the number of coins it reads. Let $r$ be a (sufficiently long) coin vector for $M$. Then the induced pure reduced strategy $\sigma^{(r)}_i$ of $M$ with randomness $r$ is defined as follows:
\begin{itemize}
\item $h=(a_1,\dots,a_m)$ is in the domain of $\sigma^{(r)}_i$ if and only if:
\begin{itemize}
  \item $P(a_1,\dots,a_m)=i$;
  \item For any $1 \leq \ell \leq m-1$ such that $P(a_1,\dots,a_\ell)=i$ it holds that $(a_1,\dots,a_\ell)$ is in the domain of $\sigma^{(r)}_i$ and when $M$ with randomness $r$ participates in an interaction, conditioned on the sequence of sent messages being $(a_1,\dots,a_\ell)$ (where $a_{\ell+1}$ is a message sent by the ITM representing player $P(a_1,\dots,a_\ell)$ for any $1 \leq \ell \leq m-1$), the message sent by $M$ is $a_{\ell+1}$.\footnote{For completeness, we may assume that whenever $M$ outputs on history $h$ an action that is not in $A(h)$, we interpret it as abort, which is denoted in the induced game by $\bot$ and is always a legal action.}
\end{itemize}
\item For any $h=(a_1,\dots,a_m)$ in the domain of $\sigma^{(r)}_i$, the action $\sigma^{(r)}_i(a_1,\cdots,a_m)$ is the message sent by $M$ with randomness $r$ conditioned on the sequence of sent messages being $(a_1,\dots,a_m)$.
\end{itemize}

The {\em mixed reduced strategy induced by $M$} is now defined as follows: the probability assigned to any pure reduced strategy $\sigma$ is the probability that the induced reduced strategy of $M$ with randomness $r$ is $\sigma$, where $r$ is uniformly chosen from $U_t$.
\end{definition}

In \cite{OR94} it is shown that for perfect-recall extensive games (which are the only games we will consider here), every mixed strategy has a behavioral strategy that is outcome equivalent. (Two strategies are outcome-equivalent if for every profile of pure strategies of the other players the two strategies induce the same distribution on outcomes; A mixed strategy is a distribution on pure strategies). Next, we define the behavioral analogue of a mixed reduced strategy, and argue that the same holds for mixed and behavioral {\em reduced} strategies: For perfect-recall extensive games, every mixed reduced strategy has a behavioral reduced strategy that is outcome equivalent.

\begin{definition}[Behavioral reduced strategy]
Given a game $\Gamma=(H,P,A,u)$, a behavioral reduced strategy for player $i$ is a collection $\sigma_i=\left(\sigma_i(h)\right)_{h \in {\cal H}}$ of independent
probability measures, where ${\cal H}$ is a subset of $\{h\in H | P(h)=i\}$, with the following properties:
\begin{itemize}
  \item $\sigma_i(h)$ is a probability measure over $A(h)$ for every $h$ in ${\cal H}$.
  \item $h=(a_1,\dots,a_m)$ is in ${\cal H}$ if and only if for any $1 \leq \ell \leq m-1$ such that $P(a_1,\dots,a_\ell)=i$ it holds that $(a_1,\dots,a_\ell) \in {\cal H}$ and $\sigma_i(a_1,\dots,a_\ell)(a_{\ell+1}) > 0$.
\end{itemize}
\end{definition}

\begin{claim} \label{mix-beh}
Every mixed reduced strategy has a behavioral reduced strategy that is outcome equivalent.
\end{claim}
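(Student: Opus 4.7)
The plan is to reduce the claim to the classical Kuhn's theorem for perfect-recall extensive games. Given a mixed reduced strategy $\mu_i$ for player $i$ (a distribution over pure reduced strategies), I would first \emph{complete} it to a genuine mixed strategy $\tilde\mu_i$ over (full) pure strategies: for each pure reduced strategy $\sigma$ in the support of $\mu_i$, extend $\sigma$ to a pure strategy $\tilde\sigma$ by prescribing an arbitrary default action at every history $h$ with $P(h)=i$ that is \emph{not} in the domain of $\sigma$. Then $\tilde\mu_i$ assigns to $\tilde\sigma$ the mass that $\mu_i$ assigns to $\sigma$. Crucially, by the definition of the domain of a pure reduced strategy, the default actions are never executed: for any profile of strategies of the other player, the histories where $\tilde\sigma$ differs from $\sigma$ are precisely those unreachable under $\tilde\sigma$, so $\tilde\sigma$ and $\sigma$ induce the same distribution on outcomes.

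Next I would invoke Kuhn's theorem (see, e.g., Osborne and Rubinstein \cite{OR94}): since the underlying extensive game is of perfect recall, there exists a behavioral strategy $\beta_i$ over all of player $i$'s histories that is outcome-equivalent to $\tilde\mu_i$. I would then define the desired behavioral reduced strategy $\beta_i^r$ by restricting $\beta_i$ to the set $\mathcal{H}$ of histories reachable under $\beta_i$, where $\mathcal{H}$ is built up inductively exactly as in the definition of behavioral reduced strategy: $h=(a_1,\dots,a_m)$ with $P(h)=i$ lies in $\mathcal{H}$ iff every proper prefix $(a_1,\dots,a_\ell)$ with $P(a_1,\dots,a_\ell)=i$ is already in $\mathcal{H}$ and satisfies $\beta_i(a_1,\dots,a_\ell)(a_{\ell+1})>0$. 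On $\mathcal{H}$ we set $\beta_i^r(h)=\beta_i(h)$.

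Finally I would verify outcome equivalence. For any pure (reduced) strategy $\sigma_{-i}$ of the other player, the distribution over outcomes induced by $(\mu_i,\sigma_{-i})$ equals that induced by $(\tilde\mu_i,\sigma_{-i})$ by the first paragraph; by Kuhn this equals the distribution induced by $(\beta_i,\sigma_{-i})$; and since $\beta_i^r$ agrees with $\beta_i$ on precisely the histories reached with positive probability, it equals the distribution induced by $(\beta_i^r,\sigma_{-i})$.

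The main obstacle is a bookkeeping one: checking that the inductive construction of $\mathcal{H}$ yields a set on which $\beta_i^r$ is a legal behavioral reduced strategy, and that every history that receives positive probability mass in any of these outcome distributions lies inside $\mathcal{H}$ (so that restricting $\beta_i$ to $\mathcal{H}$ really does not change outcomes). This boils down to noting that if $h\notin\mathcal{H}$, then some prefix of $h$ violates the positivity condition under $\beta_i$, hence $h$ is reached with probability $0$ under $(\beta_i,\sigma_{-i})$ for every $\sigma_{-i}$. The appeal to perfect recall is entirely absorbed into the black-box use of Kuhn's theorem, so no additional work on that front is needed.
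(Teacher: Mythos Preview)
Your proposal is correct and follows essentially the same route as the paper's proof sketch: extend the mixed reduced strategy to a full mixed strategy by completing each pure reduced strategy with arbitrary default actions, apply Kuhn's theorem for perfect-recall games from \cite{OR94} to obtain an outcome-equivalent behavioral strategy, and then restrict that behavioral strategy to a behavioral reduced strategy. The paper's version is terser and omits the bookkeeping verification you spell out, but the argument is the same.
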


\begin{proofsketch}
Every pure reduced strategy $\sigma_i$ for player $i$ can be extended to a (full) pure strategy by assigning arbitrary values to all histories in $\{h: P(h)=i\}$ for which $\sigma_i$ is undefined. The two strategies will be outcome-equivalent, as the outcome is only affected by the consistent histories of $\sigma_i$. It follows that every mixed reduced strategy can be extended to a mixed (full) strategy that is outcome-equivalent.

On the other hand, every behavioral strategy $\sigma_i=\left(\sigma_i(h)\right)_{h:P(h)=i}$ can be restricted to a behavioral reduced strategy by restricting the collection of probability measures accordingly. Again, the two strategies will be outcome-equivalent, as the distribution on outcomes is only affected by the consistent histories of $\sigma_i$.

Finally, as mentioned above, in \cite{OR94} it is shown that for perfect-recall extensive games, every mixed strategy has a behavioral strategy that is outcome equivalent.

Thus, given some mixed reduced strategy we extended it to a mixed strategy that is outcome-equivalent, then transform it to a behavioral strategy that is outcome-equivalent, and finally we restrict the resulting behavioral strategy to an outcome-equivalent behavioral reduced strategy.
\end{proofsketch}

As argued above, ITMs induce mixed reduced strategies, and by Claim \ref{mix-beh}, these induce behavioral reduced strategies. Thus, in the following we will model ITMs by behavioral reduced strategies. This is captured by the notion of {\em strategic representation}.

\begin{definition}[Strategic representation of an ITM]
Let $\Gamma$ be a game and let
$i\in\{1,2\}$. Let $M$ be an ITM for player $i$. Assume that $M$ halts for any infinite vector of coins and any sequence of messages sent by the other players. Let $\sigma$ be the mixed reduced strategy induced by $M$. Then the {\rm strategic representation} of $M$ is
the behavioral reduced strategy that is outcome-equivalent to $\sigma$.\footnote{In certain games there may be more than one behavioral reduced strategy that is outcome-equivalent to $\sigma$. However, our treatment will always be indifferent to the actual choice.}

Similarly, for a sequence of games $\{\Gamma^{(k)}\}_{k\in\mathbb{N}}$ and an ITM $M$ that takes a security parameter $1^k$, the strategic representation of $M$ is the sequence of strategic representations of $M(1),M(1^2),M(1^3),\dots$.
\end{definition}

\subsubsection{$\eps$-TFNE for Reduced Strategies}
\label{sec:tfne-reduced}
In Section~\ref{sec:round-parametrized-version} we presented our general definition of TFNE. However, that definition
was framed for strategies and, following the conclusion of the previous section, we actually care about reduced strategies.
To make Definition~\ref{def:eps-tfne} work for reduced strategies we notice that only two small changes need to be made:
We need to define the notion of a round $R$ reduced strategy, and we need to allow the constraint sets $T_1$ and $T_2$ to include
behavioral reduced strategies.

\begin{definition}[Round $R$ reduced strategy]
\label{def:round-R-strategy}
Let $\Gamma=(H,P,A,u)$ be an extensive game, let $R$ be a round of $\Gamma$,
and let $\sigma_i$ be a behavioral reduced strategy of player $i=P(R)$. Then $\tau=\tau(R)$
is a {\em round $R$ reduced strategy} of player $i$ consistent with $\sigma_i$ if the following hold:
\begin{itemize}
\item When $R=1$, $\tau(1)$ is a distribution over $A(\epsilon)$.
\item Otherwise, there exists some
behavioral reduced strategy $\pi_i$ of player $i$ for which $\pi_i(j)=\sigma_i(j)$ for all $j\in\{1,\ldots,R\!-\!1\}$, and such that
$\pi_i(R)=\tau_i(R)$.
\end{itemize}
\end{definition}
Throughout the paper, the behavioral reduced strategy $\sigma_i$ with which $\tau(R)$ is consistent will be evident from
the context, and so we omit reference to this consistency requirement.

Next, we modify the definition of constraints (Definition~\ref{def:constrained-game}) by allowing each constraint set $T_i$
to be a subset of $\bigotimes_{h:P(h)=i}(\Sigma_i(h)\cup\perp)$, where $\sigma_i(h)=\perp$ if the history $h$ is not in
the domain of the reduced strategy $\sigma_i$.

Finally, we observe that, following the two modifications above, Definitions~\ref{def:threat} and~\ref{def:eps-tfne}
work for behavioral reduced strategies as well (replacing ``strategy'' by ``behavioral reduced strategy'' and ``round $R$ strategy'' by
``round $R$ reduced strategy'').
\fi

\subsection{Computational Hardness in the Game-Theoretic Setting} \label{express}

The security of cryptographic protocols stems from the assumption on the limitation of the computational power of the players. In our strategic analysis of games, we also expect to deduce the (sequential) equilibrium from this limitation. However, because protocols are parameterized by a security parameter, a strategic analysis of protocols requires dealing with a {\em
sequence} of games rather than a single game. While
relating to the sequence of games is crucial in order to express
computational hardness (as this hardness is defined in an asymptotic
manner), this raises a new difficulty: How do we
extend the definition of TFNE to sequences of games?

\ifnum\draft=2
Our
\else
An appealing approach might be to try to define empty threats for
sequences of games. That is, one might consider the effect of deviations on
the expected payoff as $k$ goes to infinity (much like the
derivation of CNE from NE). However, to the best of our
understanding this approach cannot work. Loosely speaking, this is
because in order to relate to empty threats one has to consider
deviations in internal nodes of the game tree, and it is not clear
how to define such deviations for sequences of games. Typically, the  structure of the game tree changes with $k$, so it is
not clear even how to define an ``internal node'' in a \textit{sequence} of
games.

Instead, our
\fi
approach insists on analyzing empty threats for {\em
individual} games. Thus, our solution concept reflects a hybrid approach that relates to a protocol both as a family of {\em individual, extensive games} and as a {\em sequence} of {\em normal-form games}. To eliminate empty threats one
must relate to the {\em interactive} aspect of each {\em individual}
game (as this is the setting where threats are defined). In order to
claim players are playing optimally under their computational
constraints, one must think of the protocol as a {\em sequence} of
{\em one-shot} games (because computational hardness is meaningful only when players are required to choose their machines in advance, and as the traditional notion of hardness is stated asymptotically).

\subsubsection{Strategy-filters}

When considering computational games $\Gamma=\{\Gamma^{(k)}\}_{k \in \mathbb{N}}$, the computational bounds on the players will be expressed by restricting the space of available strategies for the players. The available sequences of reduced strategies for the players will be exactly those that can be played by the ITMs that meet the computational bound on the players. In our case we will consider PPT ITMs.

While on the one hand every PPT ITM fails on cryptographic
challenges for large enough values of the security parameter $k$
(under appropriate assumptions), on the other hand, PPT ITMs
can have arbitrarily large size and thus
arbitrarily much information hardwired, and so for every $k$ there is a PPT ITM that breaks the cryptographic challenges with security parameter
$k$. In our analysis, we would like to ``filter" machines according
to their ability to break cryptographic challenges for specific
$k$'s, and allow using them only in games that correspond to large
enough $k$'s, where these machines fail (and in particular, cannot
use hard-wiring to solve the cryptographic challenges).

To this end, we define the notion of a {\em strategy-filter}. For each value $k$ of the security parameter and value $\eps$, a strategy-filter maps the ITM $M$ to either $\bot$ or to its strategic representation, according to whether $M(1^k)$ violates level of security $\eps$ or does not (respectively).

\begin{procdef}{Strategy-filter}
Let $\Gamma=\{\Gamma^{(k)}\}_{k \in \mathbb{N}}$ be a computational
game and let $i$ be a player. A {\em strategy-filter} is a sequence
$F_i=\{F_i^{(k)}: {\cal M}\times [0,1] \rightarrow
\Sigma^{(k)}_i\cup \{ \bot \}\}_{k\in \mathbb{N}}$ such that for
every ITM $M$, every $k \in \mathbb{N}$ and every $\eps \in [0,1]$,
it holds that either $F_i^{(k)}(M,\eps)=\bot$, or
$F_i^{(k)}(M,\eps)=\sigma_i^{(k)}$, where $\sigma_i^{(k)}$ is the
strategic representation of the machine $M(1^k,\cdot)$.
\end{procdef}

A strategy-filter is meaningful if it allows us to reason about all reduced strategies that are considered to be feasible, in our case PPT implementable reduced strategies, and in particular does not filter them out.
\ifnum\draft=2
\else
This is captured in the following definition.
\fi

\begin{procdef}{PPT-covering filter}
A strategy-filter $F_i$ is said to be {\em PPT-covering} if for
every PPT ITM $M$ and any positive polynomial $p(\cdot)$ there exists $k_0$ such that for all $k \geq k_0$, it holds that $F_i^{(k)}(M,1/p(k))\neq \bot$.
\end{procdef}

Typically, protocols have the following security guarantee (under computational assumptions): for every $i$, every PPT ITM $M$ of $P_i$ and every polynomial $p(\cdot)$, there exists $k_0$ such that for any $k \geq k_0$, the ITM $M$ does not break level of security $1/p(k)$ in the protocol with security parameter $k$. Such a protocol will naturally have a PPT-covering filter, where if $F_i^{(k)}(M,\eps)\neq \bot$ then the reduced strategy $F_i^{(k)}(M,\eps)$ ``does not break level of security $\eps$ in the game $\Gamma^{(k)}$."

\subsubsection{Tractable Reduced Strategies}

As reflected above, the asymptotic nature of defining security does not determine any level of security for any $k$. Rather, it dictates that any PPT ITM ``eventually fails in violating $1/p(k)$ security" for any $p(\cdot)$ (where ``eventually" means for large enough $k$). Thus, we follow the same approach in our game theoretic analysis: roughly speaking, our solution concept requires that $\eps$-security will imply $\eps$-stability for any $k$ (rather than requiring a particular level of stability for each $k$). More formally, we require that for any $k$ and any $\eps$, the game induced by the protocol with security parameter $k$ be in $\eps$-TFNE, given that the available strategies for the players are those that do not break level of security $\eps$. Thus, for any pair $(k,\eps)$ we will consider the game $\Gamma^{(k)}$ with available reduced strategies restricted to those that guarantee $\eps$-security. The following definition derives from a PPT-covering filter, for each such game, the set of available reduced strategies for each player.

\begin{procdef}{Tractable reduced strategies} Let $F_i$ be a PPT-covering filter. For every $k\in \mathbb{N}$ and $\eps \in [0,1]$ we define the set $T^{(k)}_{i,\eps}(F_i)$ of $(k,\eps)$-tractable reduced strategies for player $i\in\{1,2\}$ as
$$\{F_i^{(k)}(M,\eps)|M ~\mbox{is a PPT ITM
and}~F_i^{(k)}(M,\eps)\neq\bot\}.$$
\end{procdef}
Whenever $F_i$ will be understood from the context, we will write $T^{(k)}_{i,\eps}$ to mean $T^{(k)}_{i,\eps}(F_i)$.

\subsection{Computational TFNE}
\label{sec:comp-tfne}

We can now define our computational variant of TFNE. Roughly, the definition requires that
 there exist a family
of PPT compatible constraints such that for any $k$ and any
$\eps$, the strategies played by the machines on input security
parameter $k$ are in $\eps$-TFNE in the game indexed by $(k,\eps)$.

\begin{procdef}{Computational TFNE}\label{def:comp-tfne}
Let $\Gamma$
be a computational game. A pair of PPT machines $(M_1,M_2)$ is said to be in a {\em
computational threat-free Nash equilibrium (CTFNE)} of $\Gamma$ if
there exists a pair of PPT-covering filters $(F_1,F_2)$ such that
for every $k,\eps$
 for which 
$F_1^{(k)}(M_1,\eps)$ and $F_2^{(k)}(M_2,\eps)$ are tractable the
profile
 $(F_1^{(k)}(M_1,\eps),F_2^{(k)}(M_2,\eps))$ constitutes an $\eps$-TFNE in the
$(T^{(k)}_{1,\eps}, T^{(k)}_{2,\eps})$-constrained version of
$\Gamma^{(k)}$.
\end{procdef}

\ifnum\draft=2
\else
The expressive power of Definition~\ref{def:comp-tfne} is illustrated through the following claim, which refers to Example~\ref{ex:modified-OWP-game}. We omit the proof, and proceed to more interesting applications in sections \ref{app:proof-chicken} and \ref{sec-DHR}.
\begin{claim}
\label{claim:owp-tfne}
In the modified one-way permutation game, \begin{itemize} \item[(i)] the strategy profile in which $P_1$ plays 0 and $P_2$ plays 0 after a history of 0 and randomly otherwise is a CTFNE, and \item[(ii)] any profile in which $P_2$ plays randomly after history 0 is not a CTFNE.
\end{itemize}
\end{claim}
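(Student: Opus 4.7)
The plan is to verify both parts by a direct analysis of the two-round extensive game. Throughout I use the natural PPT-covering filter pair: $F_i^{(k)}(M,\eps)$ returns $\bot$ exactly when the strategy induced by $M(1^k,\cdot)$ inverts $f$ on a uniform challenge with probability exceeding $\eps$, and otherwise returns the corresponding strategic representation. Under the one-way permutation assumption this pair is PPT-covering. I also appeal to the game's payoff calculus on the four relevant outcomes: both send $0^n$ pays $(+2,+2)$; exactly one sends $0^n$ pays $(-2,-2)$; neither sends $0^n$ and $z=x$ pays $(-1,+1)$; neither sends $0^n$ and $z\neq x$ pays $(+1,-1)$.

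For part (i), I take $M_1$ to send the hardwired string $f^{-1}(0^n)$ and $M_2$ to send $0^n$ on history $0^n$ and a uniform element of $\zon\setminus\{0^n\}$ otherwise; both machines are in $T^{(k)}_{i,\eps}$ for every $\eps\geq 1/2^n$. The equilibrium payoff $+2$ for each player is the global maximum of $u_i$. The $\eps$-NE check is routine: a non-zero $P_1$-deviation yields expected $+1-O(1/2^n)$, and a non-zero $P_2$-deviation on the reached history $0^n$ yields $-2$. For threat-freeness, observe that $\sigma^*$ is itself $\eps$-threat-free on every round, because at every round the next player's equilibrium payoff already attains the global maximum $+2$ and so no deviation can gain. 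Hence $\sigma^*$ is a valid threat-free $\pi'$ in Definition~\ref{def:threat}, and $\E[u_i(O(\sigma^*))]=+2$ dominates $\E[u_i(O(\pi))]$ for every continuation $\pi$ of every deviation $\tau(R)$, so no threat arises.

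For part (ii), interpret ``plays randomly'' quantitatively as $\alpha:=\sigma_2(0^n)(0^n)\leq 1-1/q(k)$ for some polynomial $q$. Suppose toward contradiction that some pair of PPT machines realizes a CTFNE satisfying this, and split on $\sigma_1(0^n)$ against an auxiliary polynomial threshold $1/r(k)$. \emph{Case A:} $\sigma_1(0^n)\geq 1/r(k)$. The $P_2$-deviation $\tau_2'$ that agrees with $\sigma_2$ except plays $0^n$ deterministically on history $0^n$ raises $P_2$'s value on that branch from $4\alpha-2$ to $+2$, for total gain $\sigma_1(0^n)(4-4\alpha)\geq 4/(q(k)r(k))$. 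For $\eps=1/p(k)$ with $p$ exceeding $q\cdot r$ this violates $\eps$-NE. \emph{Case B:} $\sigma_1(0^n)<1/r(k)$. Consider the $P_1$-deviation $\tau=$ ``send $0^n$''. Any $\pi$ with $\pi_1=\tau$ that is $\eps$-threat-free on round $1$ must have $P_2$ not $\eps$-threatened at round $2$ w.r.t.\ $\pi$; since round $2$ is terminal and $\pi_1$ deterministically reaches history $0^n$, this unfolds to $4-4\pi_2(0^n)(0^n)\leq\eps$, i.e.\ $\pi_2(0^n)(0^n)\geq 1-\eps/4$, yielding $\E[u_1(O(\pi))]\geq 2-\eps$. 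Any $\pi'$ with $\pi'_1=\sigma_1$ reaches $0^n$ with probability less than $1/r(k)$ and otherwise lands on a non-zero history where $u_1\leq+1$, so $\E[u_1(O(\pi'))]\leq 1+1/r(k)$. For appropriately small $\eps$ and large $r$ the difference $\E[u_1(O(\pi))]-\E[u_1(O(\pi'))]\geq 1-\eps-1/r(k)$ exceeds $\eps$, witnessing a threat to $P_1$.

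The main obstacle is choosing the polynomials $q,r,p$ so that the two sub-cases interlock: one must use PPT-covering to push $\eps$ down to any polynomial $1/p(k)$, use the quantitative interpretation of ``randomly'' to secure a polynomial gap in $\alpha$, and pick $r$ large enough that Case~B's $\eps$-slack calculation witnesses a threat while small enough that Case~A's NE violation exceeds $\eps$. The remaining ingredients --- the global-maximum argument for part (i), the payoff-table calculation in Case~A, and the vacuous threat-freeness on the last round --- follow directly from the definitions.
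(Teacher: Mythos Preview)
The paper explicitly omits the proof of this claim (``We omit the proof, and proceed to more interesting applications\ldots''), so there is no reference argument to compare against. Your proposal is a complete and essentially correct proof; the architecture --- the global-maximum observation for part~(i), and the case split on $\sigma_1(0^n)$ against a polynomial threshold for part~(ii) --- is exactly what the framework invites.

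One small correction in part~(i): your description of $M_1$ as ``sending the hardwired string $f^{-1}(0^n)$'' is off. In the paper's framework actions are identified with the transmitted messages (see the remark following the definition of computational games), so $P_1$'s action \emph{is} the string it sends, not the preimage $x$. To ``play $0$'' the machine $M_1$ simply outputs $0^k$ on input $1^k$; no hardwiring of $f^{-1}(0^k)$ is needed, and in fact a single PPT ITM cannot have $f^{-1}(0^k)$ hardwired for all $k$. This does not affect your subsequent payoff analysis, which correctly treats the reached history as $0^n$.

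For part~(ii), your argument is sound once you make explicit that the case split is per $k$ and per filter pair: to refute CTFNE you must show that for \emph{every} PPT-covering $(F_1,F_2)$ there exists some $(k,\eps)$ with both machines tractable at which $\eps$-TFNE fails. Your argument delivers this: PPT-covering forces the simple deviating machines (``send $0^k$'' for $P_1$, ``play $0^k$ on history $0^k$, else run $M_2$'' for $P_2$) into $T^{(k)}_{i,\eps}$ for all large $k$ and any inverse-polynomial $\eps$; then for each such $k$ either Case~A or Case~B applies. Concretely, taking $r(k)=2$ and $\eps=1/p(k)$ with $p(k)>\max(q(k),8)$ makes both the Case~A bound $4/(q(k)r(k))>\eps$ and the Case~B bound $1-1/r(k)-2\eps>0$ hold simultaneously, so a single choice of $\eps$ covers whichever case $\sigma_1^{(k)}(0^k)$ lands in.
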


We note that part (ii) of the claim can easily be extended to profiles in which, after history 0, $P_2$ plays 0 with probability at most $1-p(k)$ for any polynomial $p$.

\fi

\ifnum\draft=2
\else

\section{The Coin-Flipping Game}
\label{app:proof-chicken}

In the following we describe a classic protocol for coin-flipping,
formulated as a sequence of games (parameterized by a security
parameter $k$). We then show that the prescribed behavior according
to that protocol constitutes a CTFNE in the sequence of games.

Following is an informal description of the sequence of games. We
assume some perfectly binding commitment scheme with the following
properties (see Appendix~\ref{apx:defs} for a formal definition):
\begin{itemize}
\item For any security parameter $k$ (which is a common input to the sender and receiver), the ``commit" phase consists of one message from the sender to the receiver, denoted ${\sf
  com}^{(k)}$, which is of length bounded by $p(k)$ for some polynomial $p$.
\item For any PPT ITM, the advantage in guessing the committed value given the aforementioned message is
  negligible in $k$.
\end{itemize}

\noindent The description defines the legal messages in each game.
Recall that at any phase where a player is supposed to send a
message, the move ``abort" is legal (and well-defined). Note also,
that any illegal message is interpreted as abort by the other
player. The game $\Gamma^{(k)}$ is defined as follows:

\begin{enumerate}
\item Player 1 chooses a string $c$ of length at most $p(k)$ and sends it to
player 2.
\item Player 2 chooses a bit $r_2$, and sends $r_2$ to player
1.
\item Player 1 does one of the following: (1) sends to player 2 ${\sf decom}$,
where ${\sf decom}$ is a legal decommitment to $c$ revealing that
the committed value was $1-r_2$ (in that case the payoffs are
(1,0)); or (2) aborts (in that case
 the payoffs are (0,1)).
\end{enumerate}
Any other abort results in the aborting player receiving payoff
0, and the other player receiving 1.

We now describe a pair of interactive ITMs for the game
$\Gamma^{(k)}$ that form a CTFNE. We describe them interleaved, in
the form of a protocol. We denote the ITMs playing the strategies of
$P_1,P_2$ by $M_1,M_2$, respectively.
\begin{enumerate}
\item Player 1 chooses a random bit $r_1$, and sends $c={\sf com}^{(k)}(r_1)$ to
player 2 (player 1 also obtains ${\sf decom}$, which is a legal
decommitment to $c$).
\item Player 2 chooses a random bit $r_2$, and sends $r_2$ to player
1.
\item If $r_1 \neq r_2$, player 1 sends ${\sf decom}$ to
player 2. Else, player 1 aborts.
\end{enumerate}
\begin{theorem}
\label{thm:coin-flipping-tfne}
The pair $(M_1,M_2)$ forms a CTFNE for the protocol above.
\end{theorem}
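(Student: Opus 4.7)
The plan is to verify the two conditions of Definition~\ref{def:comp-tfne}: exhibit PPT-covering filters $F_1, F_2$ such that for every $(k,\eps)$ for which $(F_1^{(k)}(M_1,\eps), F_2^{(k)}(M_2,\eps))$ is tractable, this pair constitutes an $\eps$-TFNE of the $(T^{(k)}_{1,\eps}, T^{(k)}_{2,\eps})$-constrained version of $\Gamma^{(k)}$. I would set $F_1^{(k)}(M,\eps)$ to return $M$'s strategic representation iff $M$ equivocates a commitment (outputs two legal decommitments to distinct values for the same $c$) with probability at most $\eps$; perfect binding makes this trivially satisfied by every PPT $M$. I would set $F_2^{(k)}(M,\eps)$ to return $M$'s strategic representation iff $M$ predicts the committed bit $b$ from ${\sf com}^{(k)}(b)$ (for uniform $b$) with advantage over $1/2$ at most $\eps$; computational hiding makes this filter PPT-covering.

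For the $\eps$-NE condition, each player obtains payoff $1/2$ under $\sigma^*$. A tractable deviation by $P_2$ amounts to an alternative predictor of $r_1$ from $c$, which by $F_2$ succeeds with probability at most $1/2+\eps$, so $P_2$'s gain is at most $\eps$. A deviation by $P_1$ cannot improve its payoff: since $P_2$ samples $r_2$ uniformly and independently, the committed value equals $r_2$ with probability exactly $1/2$ no matter which distribution $P_1$ uses over the committed bit, commitments to garbage only force an abort, and perfect binding blocks any round 3 equivocation.

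For threat-freeness I would proceed backwards through the three rounds. At round 3, $M_1$ prescribes the unique tractable best response at every reachable history, since the only two distinct-payoff actions are abort (payoff $0$) and a legal decommitment to $1-r_2$, which exists iff the committed $r_1$ equals $1-r_2$ (payoff $1$). At round 2, any continuation threat-free on round 2 has $P_1$ playing this round 3 best response, so $P_2$'s round 2 payoff reduces to the probability of correctly guessing $r_1$ from $c$; prescribed $\sigma^*_2$ achieves exactly $1/2$ and any tractable deviation achieves at most $1/2+\eps$ by $F_2$, ruling out an $\eps$-threat. At round 1, for any deviation $\tau$ by $P_1$ (placing probabilities $p_0, p_1, p_g$ on committing to $0$, $1$, and garbage respectively), I would exhibit the threat-free continuation $\pi$ in which $P_2$ plays the oblivious strategy ``always output $\arg\max_{i\in\{0,1\}} p_i$'' at round 2 and $P_1$ plays its round 3 best response. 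This $P_2$-strategy is PPT and ignores $c$ (so it lies in $T^{(k)}_{2,\eps}$ for every $\eps$), is an $\eps$-best response to $\tau$ by the standard hiding-to-prediction reduction, and yields $P_1$ a payoff of $\min(p_0, p_1)\le 1/2$. Paired with the prescribed (threat-free) continuation $\pi'$ (which yields $P_1$ payoff exactly $1/2$), this gives $u_1(\pi)\le u_1(\pi')+\eps$, refuting the $\eps$-threat condition.

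The main obstacle is the round 1 case: the $\eps$-threat definition quantifies over \emph{all} threat-free continuations from round 2 onward, so I cannot simply invoke the prescribed continuation (uniform $P_2$ would be far from best response to a biased $\tau$ and would itself harbor an $\eps$-threat at round 2). The crucial observation making the argument go through is that the oblivious ``modal bit'' strategy is simultaneously a best response to any $\tau$ and tractable \emph{uniformly in} $\eps$, precisely because it does not touch $c$ and hence cannot be filtered out by $F_2$ for any value of $\eps$.
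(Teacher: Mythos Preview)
Your proposal is correct and tracks the paper's proof closely through the choice of filters, the $\eps$-NE argument, and the round-3 and round-2 threat analysis. The one substantive divergence is at round~1.

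The paper handles round~1 by considering the continuation $\sigma'=(\tau(1),\sigma(2),\sigma(3))$ in which $P_2$ still plays \emph{uniformly}, and then does a case split: either $\sigma'$ is already $\eps$-threat-free on round~1 (so it witnesses that $P_1$ gets only $1/2$), or it is not, in which case any $\eps$-threat-free profile in $\mathrm{Cont}(\tau(1))$ must give $P_2$ more than $1/2+\eps$, and the constant-sum structure forces $P_1$ below $1/2$. Your observation that uniform $\sigma(2)$ may itself harbor a round-2 threat against a biased $\tau$ is exactly why the paper needs the second branch of the case split.

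You instead bypass the case split by directly exhibiting a continuation that is guaranteed to be $\eps$-threat-free: $P_2$ deterministically outputs the modal committed bit and $P_1$ best-responds at round~3. This is more constructive and avoids appealing to the zero-sum structure at that step. The cost is that you must verify the modal-bit strategy is an $\eps$-best response against the biased $\tau$ among all tractable $P_2$ strategies. This does go through: writing $\alpha_b=\Pr[M\text{ outputs }1\mid {\sf com}(b)]$, the filter gives $\alpha_1-\alpha_0\le 2\eps$, and if $p_0\ge p_1$ the gain of any tractable $M$ over the modal strategy is $p_1\alpha_1-p_0\alpha_0\le \alpha_0(p_1-p_0)+2p_1\eps\le\eps$. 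You should spell this out rather than gesture at ``the standard hiding-to-prediction reduction,'' since what is actually used is the filter constraint itself together with $p_1\le 1/2$, not a reduction to the hiding experiment.
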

\begin{proof}
First we define the functions $F_1^{(k)}$ and $F_2^{(k)}$. For any $k$,
the function $F_1^{(k)}$ never maps to $\bot$ (this, roughly
speaking, reflects the fact that the protocol is secure against an
all-powerful player 1). For $F_2$ we use the following rule:
$F_2^{(k)}(M,\eps)=\bot$ if and only if ``for security parameter $k$,
the PPT ITM $M$ guesses the committed value with advantage
greater than $\eps$." More formally, $F_2^{(k)}(M,\eps)=\bot$ if and only if  when player 1 sends as the
first message a random commitment of a random bit (i.e., chooses a
random bit and then uses the aforementioned commitment scheme using
uniformly random coins), then the message with which $M$ reacts is the
committed value of player 1 with probability greater than
$1/2+\eps$.

The fact that $F_1$ is PPT-covering is straightforward. The fact
that $F_2$ is PPT covering follows directly from the security of the
commitment scheme: For any positive polynomial
$p$, every PPT ITM has advantage smaller than $1/p(k)$ in
guessing the committed value with security parameter $k$, for large
enough $k$'s.

Next, we need to show that for every $k,\eps$ for which
$F_1^{(k)}(M_1,\eps)\not=\perp$ and $F_2^{(k)}(M_2,\eps)\not=\perp$
the profile $(F_1^{(k)}(M_1,\eps),F_2^{(k)}(M_2,\eps))$ constitutes
an $\eps$-TFNE in the
$T=(T^{(k)}_{1,\eps},T^{(k)}_{2,\eps})$-constrained version of
$\Gamma^{(k)}$. Let $k,\eps$ be as above, and let
$\sigma=(\sigma_1,\sigma_2)=(F_1^{(k)}(M_1,\eps),F_2^{(k)}(M_2,\eps))$.
We first show that $\sigma$ constitutes an $\eps$-NE in the
$T$-constrained version of
$\Gamma^{(k)}$.

The strategy $\sigma_1$ chooses a random commitment of a random bit in round 1,
and in round 3 decommits whenever it can. It is easy to see that
this is optimal, as player 2 always guesses the committed value with probability
$1/2$, and so there is no strategy for player 1 for which he can decommit with probability greater than $1/2$ in round 3. It is also easy to see that player 2's strategy is an $\eps$
best-response, as any PPT ITM $M_2$ for player 2 for which
$F_2^{(k)}(M_2,\eps)\not=\perp$ does not guess with advantage more
than $\eps$. We conclude that $\sigma$ constitutes an $\eps$-NE in
the $T$-constrained version of the game $\Gamma^{(k)}$.

Next, we show that no player is facing an $\eps$-threat with respect to
$\sigma$ at any round of the
$T$-constrained version of $\Gamma^{(k)}$. Note that for both players, the expected payoff
according to $\sigma$ is $1/2$. Suppose some player is facing an $\eps$-threat
with respect to $\sigma$. We divide the proof into
cases.
\paragraph{Case 1 -- $P_1$ is facing an $\eps$-threat in round 3:} In order for
$P_1$ to improve in Step 3 by more than $\eps$, it
must play a round 3 strategy $\tau(3)$ in which he sends ${\sf decom}$ that proves that $r_1 \neq r_2$ with larger
probability than in $\sigma$. However, since in $\sigma$ player 1 sends
${\sf decom}$ whenever $r_1 \neq r_2$ (and otherwise no such ${\sf
decom}$ exists, since the commitment is perfectly binding), we
conclude that no such $\tau(3)$ exists.
\paragraph{Case 2 -- $P_2$ is facing an $\eps$-threat in round 2:}
According to the constraints, $P_2$ cannot guess $r_1$ with
probability greater than $1/2+\eps$. So in order for him to improve by \textit{more} than $\eps$,
it must be the case that he has some round 2 strategy $\tau(2)$, such that in any $\eps$-threat-free continuation
in $\mathrm{Cont}(\sigma(1),\tau(2))$ player 1
aborts with positive probability conditioned on $r_1\not=r_2$. However,
any continuation where $P_1$ aborts with zero probability
conditioned on $r_1\not=r_2$ (and sends ${\sf decom}$) is $\eps$-threat-free,
and so there is no deviation for $P_2$ for which he improves on {\em
all} $\eps$-threat-free continuation.
\paragraph{Case 3 -- $P_1$ is facing an $\eps$-threat in round 1:}
Since $\sigma$ is $\eps$-threat-free on round 1, if $P_1$ is threatened in round 1 then
he has a round 1 strategy $\tau(1)$ so that for all $\eps$-threat-free
profiles in $\mathrm{Cont}(\tau(1))$ his expected payoff is greater than $1/2+\eps$.
Consider the profile $\sigma'=(\tau(1),\sigma(2),\sigma(3))$. This profile gives both players an expected
payoff of $1/2$ (assuming $\tau(1)$ aborts with probability 0, which is clearly optimal), and is $\eps$-threat-free on round 2 (by the same argument as Case 1 above).
If $\sigma'$ is $\eps$-threat-free on round 1 as well, then $P_1$ does not improve by more than $\eps$
using the deviation $\tau(1)$. If $\sigma'$ is not $\eps$-threat-free on round 1, then in any $\eps$-threat-free
profile in $\mathrm{Cont}(\tau(1))$ player 2's payoff must be greater than $1/2+\eps$. However, this means that
$P_1$'s payoff is less than $1/2$, and again he does not improve using the deviation $\tau(1)$.
Hence, the postulated $\tau(1)$ does not exist, and so $P_1$ is not facing an $\eps$-threat in round 1.
\end{proof}
\fi

\section{Correlated Equilibria Without a Mediator}
\label{sec-DHR}

In one of the first papers to consider the intersection of game
theory and cryptography, Dodis, Halevi and Rabin proposed an
appealing methodology for implementing a correlated equilibrium in a
2-player normal-form game without making use of a
mediator~\cite{DHR00}. Under standard hardness assumptions, they
showed that for any 2-player normal-form game $\Gamma$ and any
correlated equilibrium $\sigma$ for $\Gamma$, there exists a new
2-player extensive ``extended game" $\Gamma'$ and a CNE
$\sigma'$ for $\Gamma'$, such that $\sigma$ and $\sigma'$ achieve
the same payoffs for the players.
\ifnum\draft=2
\else
(Strictly speaking $\Gamma'$ is a
sequence of games indexed by a security parameter, and a CNE is
defined for a sequence.)
\fi
However, as already pointed out by Dodis et
al., their protocol lacks a satisfactory analysis of its sequential
nature -- the resulting ``extended game" is an extensive game,
but the solution concept they use, CNE, is not strong enough for
these games.

In the following, we extend the definition of CTFNE to allow
handling this setting (that is, we define CTFNE for extensive games
with simultaneous moves at the leaves), give some justification for
our new definition, and then provide a new protocol for removing the
mediator that achieves CTFNE in a wide class of correlated
equilibria that are in the convex hull of Nash equilibria (see
definition below).

\ifnum\draft=2
\else
\subsection{The Dodis-Halevi-Rabin Protocol}
\label{DHR.app}

The ``extended game" $\Gamma'$ consists of 2 phases. In the first
phase (``preamble phase"), the players execute a protocol for
sampling a pair under the distribution $\sigma$, and in the second
phase each player plays the action implied by the sampled pair, in
the original normal-form game. The CNE of the extended game is the
profile that consists of each player playing the protocol honestly
in the first phase, and then in the second phase, if the other
player did not abort, choosing the action by the protocol's
result, and otherwise ``punishing" the other player by choosing a
``min-max" action (i.e., choosing an action minimizing the utility
resulting from the other player's best response).

This profile is indeed a CNE because an efficient player can achieve
only a negligible advantage by trying to break the cryptography in
the first phase, cannot achieve any advantage by aborting in the
first phase (as this minimizes its best possible move in the second
phase), and cannot gain any advantage in the expectation of the
payoff by deviating in the second phase, because the players are
playing a pair of actions from a correlated equilibrium.

\fi

\ifnum\draft=2
\subsection{TFNE for Games with Simultaneous Moves at the Leaves}
\label{GSML.sec}

For a formal definition of extensive game with simultaneous moves see
Osborne and Rubinstein \cite{OR94}. In order to adjust our definition for extensive games with
simultaneous moves, we notice that when a player deviates on a
history with a simultaneous move, he cannot expect the other to
react to this deviation (because they both play
at the same time). However, in order to argue that a profile is
rational, we still need to require that for every simultaneous move
in the equilibrium support, each player is playing a ``best response"
given the other player's prescribed behavior. This means the prescribed behavior for the players should form some kind of equilibrium for normal-form games. In our case, the prescribed behavior will form a NE. The question of what
should a CTFNE profile prescribe in off-equilibrium-support histories is more delicate: Clearly, in order to claim that the profile is ``rational," again we need some kind of equilibrium for normal-form games. But in this case one can
argue that after one player deviated, the other player cannot assume the
deviating player will play his prescribed behavior in the
simultaneous move (as he is already not following his prescribed
behavior). However, we argue that it is in fact still rational to
assume the deviating player will play his prescribed behavior. The
justification for this claim is essentially the same as the
justification for the rationality of NE. Once there is a prescribed
behavior that is a NE, each player knows the other has no incentive
to deviate, and so he also has no incentive to deviate.

Thus, our new definition of TFNE for extensive games with
simultaneous moves at the leaves (abbreviated GSML), is essentially
the same as the original definition, except that (i) we require a profile in TFNE to prescribe a NE
in any terminal leaf, and (ii) in the definition of a threat we
do not allow a player to assume the other will deviate from his
strategy in any NE. In order to
formally modify our definition of TFNE to achieve (ii), essentially
we would need to define the only threat-free continuation on a leaf
to be the one that assigns to the players the actions in the
prescribed NE (which expresses the idea that a player is not allowed
to assume the other will deviate from his strategy in any NE).

However, we adopt an equivalent, simpler convention. Given a GSML
$\Gamma$ and a profile $\sigma$ that assigns a NE at every simultaneous
move, we look at a slightly modified game $\Gamma'$: All
simultaneous moves are removed, and instead at each leaf where a simultaneous move was removed each player is assigned
his expected
payoff in the corresponding NE for that leaf. Note that the modified game is now a regular extensive game
with \textit{no} simultaneous moves. We then ``prune" the
profile to remove all the distributions on actions on all
simultaneous leaves and denote the resulting profile $\sigma'$. We say
that Ê$\sigma$ is a TFNE in $\Gamma$ if $\sigma'$ is a TFNE in
$\Gamma'$. We call $\Gamma'$ and $\sigma'$ the {\em pruned
representation} of $\Gamma$ and $\sigma$.

The definition of CTFNE for GSML is derived from the above
definition of TFNE for GSML, similarly to the derivation of CTFNE from
TFNE in the non-simultaneous case.

It seems that
for general GSML's our definition is too strong, because
in certain cases it is computationally intractable to compute the assigned NE in every leaf. While we do not yet know how to relax our definition to apply to these
cases, we believe our definition, when met, is sufficient.

\else
\subsection{TFNE for Games with Simultaneous Moves at the Leaves}
\label{GSML.sec}

The definition of an extensive game with simultaneous moves is
similar to the definition of an ordinary extensive game. The
main difference is that now the function $P$ maps to (nonempty)
sets of players rather than to single players. The definition of
history is then changed  to a sequence of sets of actions
rather than a sequence of actions, and the definitions of a
strategy and a payoff function
are both also changed accordingly. For a formal definition see
Osborne and Rubinstein \cite{OR94}.

In order to adjust our definition for extensive games with
simultaneous moves, we notice that when a player deviates on a
history with a simultaneous move, he cannot expect the other to
react to this deviation (because they both play
at the same time). However, in order to argue that a profile is
rational, we still need to require that for every simultaneous move
in the equilibrium support, each player is playing a ``best response"
given the other player's prescribed behavior. This means the prescribed behavior for the players should form some kind of equilibrium for normal-form games. In our case, the prescribed behavior will form a NE. The question of what
should a CTFNE profile prescribe in off-equilibrium-support histories is more delicate: Clearly, in order to claim that the profile is ``rational," again we need some kind of equilibrium for normal-form games. In our case the only deviation will be prematurely aborting
without completing the preamble phase, which leads to the original
normal-form game without agreeing on a sampled pair. In this case one can
argue that after one player aborted, the other (non-aborting) player cannot assume the
aborting player will play his prescribed behavior in the
simultaneous move (as he is already not following his prescribed
behavior). However, we argue that it is in fact still rational to
assume the aborting player will play his prescribed behavior. The
justification for this claim is essentially the same as the
justification for the rationality of NE. Once there is a prescribed
behavior that is a NE, each player knows the other has no incentive
to deviate, and so he also has no incentive to deviate. The
essential difference between a deviation in an extensive game and a
deviation in a simultaneous move, is that in the former, once a
player deviated, the other player is facing a fact. He now has to
readjust his behavior according to this deviation. However, in the
latter, there is no point for a player to deviate from the prescribed
NE, because the other player will not know about this deviation
prior to choosing his move (if at all).  Thus, for terminal leaves that are off-equilibrium-support (i.e., in the original normal-form game that follows an abort of some player), we claim it is sufficient for a CTFNE to prescribe a NE as well.

The bottom line of this discussion is that players cannot assume
other players will deviate from any prescribed NE in any terminal
leaf. Thus, our new definition of TFNE for extensive games with
simultaneous moves at the leaves (abbreviated GSML) is essentially
the same as the original definition, except that (i) we require a profile in TFNE to prescribe a NE
in any terminal leaf, and (ii) in the definition of a threat we
do not allow a player to assume the other will deviate from his
strategy in any NE at a terminal leaf. In order to
formally modify our definition of TFNE to achieve (ii), essentially
we would need to define the only threat-free continuation on a leaf
to be the one that assigns to the players the actions in the
prescribed NE (which expresses the idea that a player is not allowed
to assume the other will deviate from his strategy in any NE).

However, we adopt an equivalent, simpler convention. Given a GSML
$\Gamma$ and a profile $\sigma$ that assigns a NE at every simultaneous
move, we look at a slightly modified game $\Gamma'$: All
simultaneous moves are removed, and instead at each leaf where a simultaneous move was removed each player is assigned
his expected
payoff in the corresponding NE for that leaf. Note that the modified game is now a regular extensive game
with \textit{no} simultaneous moves. We then ``prune" the strategy
profile to remove all the distributions on actions on all
simultaneous leaves and denote the resulting profile $\sigma'$. We say
that  $\sigma$ is a TFNE in $\Gamma$ if $\sigma'$ is a TFNE in
$\Gamma'$. We call $\Gamma'$ and $\sigma'$ the {\em pruned
representation} of $\Gamma$ and $\sigma$.

The definition of CTFNE for GSML is derived from the above
definition of TFNE for GSML, similarly to the derivation of CTFNE from
TFNE in the non-simultaneous case.

\paragraph{A note on the strength of our definition} It seems that
for general GSMLs our definition is too strong. The reason is that
in certain cases it is computationally intractable for the players
to play the prescribed NE in every leaf (it is easy to construct
simple sequences of games where one cannot assign tractable Nash equilibria at all leaves).
While we do not yet know how to relax our definition to apply to these
cases, we believe our definition, when met, is sufficient.

\subsection{Our Protocol}
For a non-trivial class of correlated equilibria, we show how to
modify the DHR protocol to achieve CTFNE. Our basic idea is to use
Nash equilibria as ``punishments" for aborting players. That is, if
there is a NE that assigns to a player a payoff at most his expected
payoff when not aborting, then assigning this NE in case he aborts
serves as a punishment and yields that the player has no incentive
to abort. In the following we characterize a family of correlated
equilibria for which we can use the aforementioned punishing
technique, and prove that for this family we can remove the mediator
while achieving CTFNE.

We say that a correlated equilibrium $\pi$ is a {\em convex
combination of Nash equilibria} if $\pi$ is induced by a distribution on (possibly mixed) Nash equilibria. (The set of such distributions is
sometimes referred to as the {\em convex hull of Nash equilibria}.)
Note that any such distribution is a correlated equilibrium (CE), but the
converse is not true.

Let $\pi$ be a correlated equilibrium for a two-player game
$\Gamma$ that is a convex combination of a set $N$ of NEs. We say
that $\pi$ is {\em weakly Pareto optimal} if there does not exist a different CE
$\rho$ in the convex hull of $N$ for which both
$\E[u_1(O(\rho))]> \E[u_1(O(\pi))]$ and $\E[u_2(O(\rho))]> \E[u_2(O(\pi))]$.

We say that a distribution is {\em samplable} if there exists a probabilistic TM that halts on every infinite randomness vector, and can sample it. This is equivalent to requiring that all probabilities can be expressed in binary (assuming we work over $\zo$). Note that every distribution can be approximated arbitrarily accurately by a samplabale distribution.

\begin{procthm} Assume there exists a non-interactive computationally binding commitment scheme. Let $\pi$ be a weakly Pareto optimal correlated equilibrium for a two-player game $\Gamma$ that is a samplable convex combination $\Pi$ of some set of samplable Nash equilibria. Then there exists an extended extensive game and a
profile that achieves the same expected payoffs as $\pi$ and is a
CTFNE.
\end{procthm}

\begin{proof}
Since $\Pi$ is samplable, the common denominator of all probabilities in $\Pi$ is a power of two. Thus, we can assume $\Pi$ is a {\em uniform} distribution on a sequence of Nash equilibria that may contain repetitions, where the length of the sequence is a power of two. Let $2^\ell$ be the length of that sequence, and let $(\pi_{0^\ell},\dots,\pi_{1^\ell})$ be that sequence. Note that the distribution $\pi$ can now be generated by first choosing uniformly at random a string $r$ in $\zo^\ell$, and then choosing a pair of actions according to $\pi_r$.

Let $\widehat{\sigma}^i$ be the NE that
assigns the worst payoff for $P_i$ (this value represents the
``severest punishment" for player $i$).

Our protocol embeds a 2-party string sampling protocol, which is a simple generalization of the Blum coin flipping protocol \cite{Bl}. The protocol consists of simply running the Blum protocol in parallel for a fixed number of times. This protocol, in turn, relies on a perfectly binding commitment scheme as in Section $\ref{app:proof-chicken}$, whose formal definition can be found in Appendix~\ref{apx:defs}.

\ifnum\draft=2
We
\else
As in Section $\ref{app:proof-chicken}$, we describe the two ITMs that form the protocol in an interleaved manner. We
\fi
denote the ITMs playing the strategies of
$P_1,P_2$ by $M_1,M_2$, respectively.
\begin{itemize}
\item {\sf Round 1:} Player 1 chooses uniformly at random a string $r=(r_1,\dots,r_\ell)$ from $\zo^\ell$, and sends $c=(c_1={\sf com}^{(k)}(r_1),\dots,c_\ell={\sf com}^{(k)}(r_\ell))$ to
player 2 (player 1 also obtains $({\sf decom}_1,\dots,{\sf decom}_\ell)$, where ${\sf decom}_i$ is a legal decommitment with respect to $c_i$ and $r_i$).
\item {\sf Round 2:} If player 1 aborted, the assigned NE is $\widehat{\sigma}^1$. Else, player 2 chooses a uniformly random string $r'=(r'_1,\dots,r'_\ell)$ from $\zo^\ell$, and sends $r'$ to player
1.
\item {\sf Round 3:} If player 2 aborted, the assigned NE is $\widehat{\sigma}^2$. Else, player 1 sends the message $((r_1,{\sf decom}_1),\dots,(r_\ell,{\sf decom}_\ell))$.
\item If player 1 aborted, the assigned NE is $\widehat{\sigma}^1$. Else, player 2 verifies that ${\sf decom}_i$ is a legal decommitment with respect to $c_i$ and $r_i$ for $1 \leq i \leq \ell$. If the verification fails (which is equivalent to an abort of player 1, as it means player 1 sent an illegal message), the assigned NE is $\widehat{\sigma}^1$. Else, the assigned NE is $\pi_{r \oplus r'}$ (where $\oplus$ is bitwise exclusive-or).
\end{itemize}
\ifnum\draft=2
We now show that the pair $(M_1,M_2)$ forms a CTFNE for the protocol above.
\else
\begin{lemma}
The pair $(M_1,M_2)$ forms a CTFNE for the protocol above.
\end{lemma}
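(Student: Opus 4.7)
The plan is to parallel the proof of Theorem~\ref{thm:coin-flipping-tfne}, adapting it to the $\ell$-fold parallel commitment, the multi-round sampling phase, and the fact that payoffs at the leaves now come from assigned Nash equilibria (via the pruned representation of Section~\ref{GSML.sec}). First, I would define $F_1^{(k)}(M,\eps)=\perp$ iff $M$, in the committer role, breaks the computational binding of the commitment scheme on security parameter $k$ with probability greater than $\eps$, and $F_2^{(k)}(M,\eps)=\perp$ iff $M$, in the receiver role, distinguishes two arbitrary commitments in $\zo^\ell$ with advantage greater than $\eps$. Both filters are PPT-covering by standard cryptographic arguments on the underlying bit-commitment, with $F_2$ handled by an $\ell$-fold hybrid over the parallel commitments.

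Fix $k,\eps$ with $\sigma_i=F_i^{(k)}(M_i,\eps)\neq\perp$. For the $\eps$-NE step, the main observation is that since honest player~$2$ chooses $r'$ independently of the commitment and honest player~$1$ chooses $r$ independently of $r'$, the string $r\oplus r'$ is uniform over $\zo^\ell$ under any non-aborting tractable deviation (up to $\eps$-slackness from hiding/binding), so any such deviation yields expected payoff within $\eps$ of $\E[u_i(O(\pi))]$. Any abort, or an invalid commitment that forces an abort in round~$3$, triggers the punishment NE $\widehat{\sigma}^i$, which gives player~$i$ payoff at most $\E[u_i(O(\pi))]$ because $\widehat{\sigma}^i$ is the worst equilibrium in $N$ for player~$i$ and $\pi$ is a convex combination over $N$. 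Threat-freeness at round~$3$ is then immediate (decommitting weakly beats aborting), and at round~$2$ under the honest profile the same uniformity-plus-hiding argument shows player~$2$ has no strictly-improving tractable deviation.

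The main obstacle, and where weak Pareto optimality enters, is round~$1$. Suppose for contradiction that player~$1$ is $\eps$-threatened at round~$1$ via some tractable $\tau(1)$ producing a commitment to a hidden string $r^*$. By definition, every $\eps$-threat-free continuation $\pi^*\in\mathrm{Cont}(\tau(1))$ must give player~$1$ expected payoff strictly greater than $\E[u_1(O(\pi))]+\eps$. Round-$2$ threat-freeness of $\pi^*$ against honest round-$3$ decommitment forces $\pi^*(2)$ to be an approximate best response for player~$2$, hence to concentrate on $\{r'\in\zo^\ell : u_2(\pi_{r^*\oplus r'})\approx\max_s u_2(\pi_s)\}$. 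Consequently the induced outcome distribution $\rho$ of $\pi^*$ is a convex combination of equilibria in $N$ with $\E[u_2(O(\rho))]\geq\max_s u_2(\pi_s)\geq\E[u_2(O(\pi))]$. If the latter is a strict inequality, then combined with the threat assumption $\E[u_1(O(\rho))]>\E[u_1(O(\pi))]+\eps$, the distribution $\rho$ strictly Pareto-dominates $\pi$ within the convex hull of $N$, contradicting weak Pareto optimality. Otherwise every $\pi_s$ attains the same maximal $u_2$-value, whence the uniform-$r'$ continuation itself is $\eps$-threat-free and yields player~$1$ payoff exactly $\E[u_1(O(\pi))]$, again contradicting the threat assumption. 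Composing these per-round analyses yields the claimed CTFNE.
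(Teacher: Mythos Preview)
Your overall architecture is reasonable and tracks the paper's structure, but the round-1 threat analysis contains a genuine gap, and your approach there diverges from the paper's in a way that matters.

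\textbf{Filters.} The paper takes $F_1^{(k)}$ to be the trivial filter (never $\bot$), relying on \emph{perfect} binding so that player~1 is information-theoretically bound; it takes $F_2^{(k)}(M,\eps)=\bot$ exactly when $M$ gains $\eps$ in expected payoff against the honest $M_1$. Your filters (binding-breaking for $F_1$, distinguishing advantage for $F_2$) are different choices. They are not wrong per se, but they make the subsequent analysis more delicate---in particular your $F_2$ does not directly encode ``player~2 cannot unilaterally $\eps$-improve,'' which is precisely what the paper's $F_2$ gives for free and uses repeatedly.

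\textbf{The gap at round 1.} Your key step is: ``round-2 threat-freeness of $\pi^*$ forces $\pi^*(2)$ to be an approximate best response for player~2, hence to concentrate on $\{r':u_2(\pi_{r^*\oplus r'})\approx\max_s u_2(\pi_s)\}$.'' This does not follow. Threat-freeness of $\pi^*$ at round~2 only says that player~2 is not facing an $\eps$-threat, i.e., no deviation $\tau'(2)$ realizable in $T_2$ makes \emph{all} threat-free continuations from $\tau'(2)$ beat \emph{all} threat-free continuations from $\pi^*(2)$. It does \emph{not} say $\pi^*(2)$ is near-optimal in the unconstrained sense. In particular, when $\tau(1)$ commits to a random hidden $r^*$, the unconstrained best response (choosing $r'$ as a function of $r^*$) is not in $T_2$, so your ``concentrates on the argmax'' conclusion fails; and even when $\tau(1)$ commits to a fixed $r^*$, the jump from ``$\eps$-best among $T_2$'' to ``$u_2(\rho)\ge\max_s u_2(\pi_s)$'' is unjustified. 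Your subsequent dichotomy (``if strict, Pareto-dominate; otherwise all $\pi_s$ share the same $u_2$'') therefore rests on an unestablished premise, and the ``otherwise'' branch is also a non sequitur: equality $\E[u_2(O(\rho))]=\E[u_2(O(\pi))]$ does not imply all $\pi_s$ have the same $u_2$-value.

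\textbf{How the paper handles round 1.} Rather than trying to characterize \emph{every} threat-free $\pi^*\in\mathrm{Cont}(\tau(1))$, the paper exhibits \emph{one}: player~2 honest at round~2, player~1 honest at round~3. To show this continuation is threat-free, the paper first observes (from the assumed threat) that in \emph{any} threat-free continuation of $\tau(1)$, player~1 strictly improves; weak Pareto optimality then forces player~2 \emph{not} to improve in any such continuation, so $u_2\le\E[u_2(O(\sigma))]$. Since honest play achieves this upper bound exactly (conditioned on no abort in round~1), player~2 is not threatened there, hence the honest continuation is threat-free---and it gives player~1 exactly $\E[u_1(O(\sigma))]$, contradicting the assumed threat. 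This is the cleaner direction: use Pareto optimality to \emph{upper-bound} player~2 across all threat-free continuations, then certify one specific continuation as achieving the bound.
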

\begin{proof}
\fi
Let $\{\tilde{\Gamma}^{(k)}\}_{k \in \mathbb{N}}$ be the sequence of games induced by the protocol. Denote the pruned
representation of $\tilde{\Gamma}^{(k)}$ by $\Gamma^{(k)}$. Let
$\tilde{\sigma}^{(k)}_1,\tilde{\sigma}^{(k)}_2$ be the strategies of
$P_1,P_2$ in the protocol with security parameter $k$,
and let $\sigma^{(k)}_1,\sigma^{(k)}_2$ be their pruned
representations. Let $\sigma^{(k)} =
(\sigma^{(k)}_1,\sigma^{(k)}_2)$. We prove that $\{\sigma^{(k)}\}$
is a CTFNE in $\{\Gamma^{(k)}\}$, which, by the discussion of Section~\ref{GSML.sec},
implies that $\{\tilde{\sigma}^{(k)}\}$ is a CTFNE in
$\{\tilde{\Gamma}^{(k)}\}$.

First we define the functions $F_1^{(k)}$ and $F_2^{(k)}$. For any $k$,
the function $F_1^{(k)}$ never maps to $\bot$ (this, roughly
speaking, reflects the fact that the protocol is secure against an
all-powerful player 1, which follows from the perfect binding property of the commitment scheme). For $F_2$ we use the following rule:
$F_2^{(k)}(M,\eps)=\bot$ if and only if
\be\label{rule-F2}\E[u_2^{(k)}(O(\sigma_1^{(k)},\sigma^{(k)}_M))] \geq
\E[u_2^{(k)}(O(\sigma^{(k)}))]+\eps,\ee
where $\sigma^{(k)}_M$ is the
strategic representation of machine $M$ and $\sigma_1^{(k)}$ is the
strategic representation of machine $M_1$, both with security parameter
$k$. In other words, $P_2$ cannot unilaterally $\eps$-improve in the $(T^{(k)}_{1,\eps},T^{(k)}_{2,\eps})$-constrained version of
$\Gamma^{(k)}$.

The fact that $F_1$ is PPT-covering is straightforward. The fact
that $F_2$ is PPT covering follows from the security of the
commitment scheme, as we prove next.

\begin{claim}
The strategy-filter $F_2$ is PPT-covering.
\end{claim}

\begin{itemize}
\item[] \begin{proof}
Suppose $F_2$ is not PPT-covering. Then from (\ref{rule-F2}) there is a PPT ITM $M$ and a
polynomial $p$ such that \be
\E[u_2^{(k)}(O(\sigma^{(k)}_1,\sigma_M^{(k)}))]\geq\E[u_2^{(k)}(O(\sigma^{(k)}_1,\sigma^{(k)}_2))]+1/p(k)
\label{equ-adv}\ee for infinitely many $k$'s, where $\sigma_M^{(k)}$ is
the strategic representation of the machine $M$ with security
parameter $k$.

First, we show that we can assume $M$ does not abort in round 2. An abort of $P_2$ leads to a leaf with $\widehat{\sigma}^2$.
But since $\pi$ is a convex combination of NEs, following the
protocol would mean playing a NE. Since by definition
$\widehat{\sigma}^2$ is the worst NE for player 2, it follows that
the machine $M'$ that behaves the same as $M$, but whenever $M$
aborts, $M'$ instead follows the protocol (i.e.\ acts like $M_2$) does at least as well as $M$. The machine $M'$ is well-defined, as the reduced strategy $\sigma^{(k)}_2$ is in fact a full strategy, and is defined everywhere.\footnote{Note that we assume here that there exists such PPT ITM $M'$. This may not always be the case. One reason is that sometimes detecting with probability $1$ whether $M$ aborted cannot be done in polynomial time (or at all). The reason is that any illegal message is regarded as abort, but sometimes a party cannot ``know" whether its message is illegal or not. See \cite{AL07}, Section 6.3 for an example. Another reason could be that in order to ``emulate" $M_2$, the machine $M'$ needs to be in some internal state. We note, however, that in our case neither problem occurs.}

Since the payoffs in $\{\Gamma^{(k)}\}$ are bounded in $k$ and the number of NEs in $\pi$ is fixed in $k$, by (\ref{equ-adv}) there exists a polynomial $p$ and (at least one) $s \in \zo^\ell$ such that for infinitely many $k$'s
$$\Pr[O(\sigma^{(k)}_1,\sigma_M^{(k)})=\pi_s]-\Pr[O(\sigma^{(k)}_1,\sigma_2^{(k)})=\pi_s]\geq 1/{p(k)}.$$

It follows that for infinitely many $k$'s
\be\label{r-r'}\Pr_{(\sigma^{(k)}_1,\sigma_M^{(k)})}[ r \oplus r' = s]\ - \Pr_{ (\sigma^{(k)}_1,\sigma_2^{(k)})}[ r \oplus r' = s] \geq 1/{p(k)}.\ee

\begin{claim}
There exists a polynomial $q$ such that for each $k$ satisfying (\ref{r-r'}) there exists some $i\in\{1,\ldots,\ell\}$ for which
\be \label{adv-m}\Pr_{(\sigma^{(k)}_1,\sigma_M^{(k)})}[r_i \oplus r'_i=s_i | r_j \oplus r'_j=s_j\mbox{ }\forall j<i]-1/2\geq 1/{q(k)}.\ee
\end{claim}

\begin{itemize}
\item[]
\begin{proof}
We show that the claim holds with  $q(k)=2^\ell\cdot p(k)$. Let $k$ be such that (\ref{r-r'}) holds, and suppose towards a contradiction
that (\ref{adv-m})  does not hold for any $i\in\{1,\ldots,\ell\}$. Then
\begin{align*}
\Pr_{(\sigma^{(k)}_1,\sigma_M^{(k)})}[ r \oplus r' &= s]\ - \Pr_{(\sigma^{(k)}_1,\sigma_2^{(k)})}[ r \oplus r' = s] \\
= &\Pr_{(\sigma^{(k)}_1,\sigma_M^{(k)})}[r_1 \oplus r'_1=s_1]\cdot\Pr_{(\sigma^{(k)}_1,\sigma_M^{(k)})}[r_2 \oplus r'_2=s_2|r_1 \oplus r'_1=s_1]\cdot\ldots\\  &\cdot\Pr_{(\sigma^{(k)}_1,\sigma_M^{(k)})}[r_\ell \oplus r'_\ell=s_\ell | r_j \oplus r'_j=s_j\mbox{ }\forall j<\ell]
 -\Pr_{(\sigma^{(k)}_1,\sigma_2^{(k)})}[ r \oplus r' = s]\\
< &\left(\frac{1}{2}+\frac{1}{q(k)}\right)^\ell - \frac{1}{2^\ell}\\
< &\frac{2^\ell}{q(k)} = \frac{1}{p(k)}.
\end{align*}

The first inequality holds since the distribution on $r \oplus r'$ in $(\sigma^{(k)}_1,\sigma_2^{(k)})$ is uniform on $\{0,1\}^\ell$.
The second inequality follows from the observation that in $\left(1/2+1/{q(k)}\right)^\ell$ we are summing over $2^\ell$ terms, one equal to $1/2^\ell$ and the others strictly smaller than $1/q(k)$. Thus, we get a contradiction to (\ref{r-r'}).
\end{proof}
\end{itemize}

Since there are infinitely many $k$'s for which (\ref{adv-m}) holds, and because $\ell$ is fixed, there must exist some $i\in\{1,\ldots,\ell\}$ for which (\ref{adv-m}) holds infinitely often.
This, however, yields a PPT machine $A$ that breaks the hiding property of the commitment scheme: Given a commitment $c={\sf com}^{(k)}(r)$ for a uniformly chosen random bit $r$, the machine $A$ chooses uniformly at random a string $(r_1,\dots,r_{i-1},r_{i+1},\ldots, r_\ell)$ from $\zo^{\ell-1}$, and runs $M$ on
$$(c_1={\sf com}^{(k)}(r_1),\dots,c_{i-1}={\sf com}^{(k)}(r_{i-1}),c,c_{i+1}={\sf com}^{(k)}(r_{i+1}),\ldots, c_\ell={\sf com}^{(k)}(r_\ell))$$
to get output $r'$. Then, if $r_j \oplus r'_j=s_j\mbox{ }\forall j<i$, algorithm $A$ outputs $s_i \oplus r'_i$, and otherwise $A$ outputs a uniformly random bit.
Clearly $A$ is a PPT machine. From (\ref{r-r'}) it follows that
infinitely often, with probability at least $1/2^\ell$ it will be the case that
$r_j \oplus r'_j=s_j\mbox{ }\forall j<i$.
Once $r'$ is such that $r_j \oplus r'_j=s_j\mbox{ }\forall j<i$,
(\ref{adv-m}) implies that $\Pr[s_i \oplus r'_i=r_i | r_j \oplus r'_j=s_j\mbox{ }\forall j<i]\geq 1/2 + q(k)$. Thus, in total, for infinitely many
$k$'s it holds that 
$$\Pr[s_i \oplus r'_i=r_i] =\left (1-\frac{1}{2^{\ell}}\right)\cdot \frac{1}{2} + \frac{1}{2^{\ell}}\cdot\left(\frac{1}{2}+ q(k)\right)=\frac{1}{2}+\frac{q(k)}{2^{\ell}},$$
which means that $A$ breaks the hiding property of the commitment scheme. This is a contradiction.
\end{proof}
\end{itemize}

Next, we show that for all $k,\eps$ for which
$F_1^{(k)}(M_1,\eps)\not=\perp$ and $F_2^{(k)}(M_2,\eps)\not=\perp$
the profile $(F_1^{(k)}(M_1,\eps), F_2^{(k)}(M_2,\eps))$ constitutes
an $\eps$-TFNE in the
$T=(T^{(k)}_{1,\eps}, T^{(k)}_{2,\eps})$-constrained version of
$\Gamma^{(k)}$. Let $k,\eps$ be as above, and let
$\sigma=(\sigma_1,\sigma_2)=(F_1^{(k)}(M_1,\eps),F_2^{(k)}(M_2,\eps))$.
\ifnum\draft=2
\else
We first show that $\sigma$ constitutes an $\eps$-NE in the
$T$-constrained version of
$\Gamma^{(k)}$.
\fi
Suppose $P_1$ unilaterally $\eps$-improves in the $T$-constrained version of $\Gamma^{(k)}$.
From similar arguments as above we can assume $P_1$ never aborts. But when $P_1$ never aborts the outcome is exactly $\pi$, as the players are playing $\pi_{r \oplus r'}$, and $r'$ is chosen uniformly at random.
\ifnum\draft=2
\else

\fi
Suppose now that $P_2$ unilaterally $\eps$-improves in the $T$-constrained version of $\Gamma^{(k)}$. However, this is a contradiction to the constraints, that state that for any $k$ $P_2$ cannot unilaterally $\eps$-improve in the $(T^{(k)}_{1,\eps},T^{(k)}_{2,\eps})$-constrained version of
$\Gamma^{(k)}$.

Next, we show that no player is $\eps$-threatened with respect to
$\sigma$ at any round of the
$T$-constrained version of  $\Gamma^{(k)}$. To this end, suppose towards a contradiction that some player is $\eps$-threatened
with respect to $\sigma$. We divide the proof into
cases.

\ifnum\draft=2
{\bf Case 1 -- $P_1$ is facing an $\eps$-threat in round 3:}
\else
\paragraph{Case 1 -- $P_1$ is facing an $\eps$-threat in round 3:}
\fi
In step 3 player 1 has exactly two options: He can (i) play honestly,
send $((r_1,{\sf decom}_1),\dots,(r_\ell,{\sf decom}_\ell))$ which he generated in round 1, and receive $\E[u_1(O(\sigma))]$, or he can (ii) abort and receive $\E[u_1(O(\widehat{\sigma}^1))]$. The value $\E[u_1(O(\widehat{\sigma}^1))]$ is at most $\E[u_1(O(\sigma))]$, and so $P_1$ cannot improve over $\E[u_1(O(\sigma))]$. Hence player 1 is not facing an $\eps$-threat at round 3.

\ifnum\draft=2
{\bf Case 2 -- $P_2$ is facing an $\eps$-threat in round 2:}
\else
\paragraph{Case 2 -- $P_2$ is facing an $\eps$-threat in round 2:}
\fi
We first note that for any round 1 strategy for $P_1$ and round 2 strategy for $P_2$, the round strategy of playing honestly in round 3 for $P_1$ is threat-free, since he cannot improve over that strategy (again, since his only deviation is aborting, which gives him the worst possible NE). Thus, if $P_2$ is $\eps$-threatened at round 2, he has some round strategy that $\eps$-improves over $\E[u_2(O(\sigma))]$ when $P_1$ plays in round 3 (and 1) according to the protocol. This means that $P_2$ unilaterally $\eps$-improves, which contradicts the constraints (as well as the $\eps$-NE).

\ifnum\draft=2
{\bf Case 3 -- $P_1$ is facing an $\eps$-threat in round 1:}
\else
\paragraph{Case 3 -- $P_1$ is facing an $\eps$-threat in round 1:}
\fi
If $P_1$ is $\eps$-threatened in round 1, he has some round 1 strategy $\tau(1)$ for which every $\eps$-threat-free continuation $\eps$-improves over every $\eps$-threat-free continuation of $\sigma_1(1)$. We will  describe an $\eps$-threat-free continuation of $\tau(1)$ and an
$\eps$-threat-free continuation of $\sigma_1(1)$ that contradict this.

{ The $\eps$-threat-free continuation of $\sigma_1(1)$}: We established in Case 2 that when $P_1$ plays honestly in round 1, if $P_2$ plays honestly in round 2 he is not $\eps$-threatened. We also established there that $P_1$ playing honestly in round 3 is always $\eps$-threat-free. If follows that the continuation of both players playing honestly in rounds 2 and 3 is an $\eps$-threat-free continuation of $\sigma_1(1)$. On this profile $P_1$ receives $\E[u_1(O(\sigma))]$.

{ The $\eps$-threat-free continuation of $\tau_1(1)$}: As we established in Case 2, playing honestly in round 3 is always $\eps$-threat-free for $P_1$. Now, note that there is no profile in which both players improve simultaneously -- because all leaves are Nash equilibria, such a profile would be a distribution on Nash equilibria that contradicts the Pareto-optimality of $\pi$. Note also that because $P_1$ receives the worst possible payoff when he aborts, it follows that he improves also conditioned on not aborting (as this can only help him). Thus, in any threat-free continuation of $\tau(1)$, conditioned on $P_1$ not aborting in round 1, $P_2$ again cannot improve over $\E[u_2(O(\sigma))]$, as this again contradicts the Pareto-optimality of $\pi$. However, if $P_2$ plays honestly in round 2 and then $P_1$ plays honestly in round 3, then $P_2$ receives exactly $\E[u_2(O(\sigma))]$ conditioned on $P_1$ not aborting in round 1. It follows that this continuation is the best possible for $P_2$, and thus $P_2$ is not $\eps$-threatened in round 2 of this continuation. It follows that this continuation is $\eps$-threat-free. However, in this continuation $P_1$ receives $\E[u_1(O(\sigma))]$ conditioned on not aborting, and thus receives at most $\E[u_1(O(\sigma))]$ without the conditioning.
\end{proof}
\ifnum\draft=2
\else
This completes the proof of the theorem.
\end{proof}
\fi

\section{A General Theorem}
In this section we prove a general theorem identifying sufficient conditions for a strategy profile
to be a TFNE. The first condition is that the profile must be weakly Pareto optimal:

\begin{procdef}{Weakly Pareto optimal}
A strategy profile $\sigma\in T$ of an extensive game $\Gamma=(H,P,A,u)$ with constraints $T$
is {\em weakly Pareto optimal} if there does not exist
a strategy profile $\pi\in T$ for which both
$\E[u_1(O(\pi))]> \E[u_1(O(\sigma))]$ and $\E[u_2(O(\pi))]> \E[u_2(O(\sigma))]$.
\end{procdef}

Next, we require the profile to be $\eps$-safe. Intuitively, this just means that a player cannot harm the other too much
by a unilateral deviation (as opposed to not being able to gain too much, which is the NE condition).

\begin{procdef}{$\eps$-safe}
A strategy profile $\sigma=(\sigma_1,\sigma_{2})\in T$ of an extensive game $\Gamma=(H,P,A,u)$ with constraints $T=(T_1,T_2)$
is {\em $\eps$-safe} if for each player $i$,
$$\E\left[u_{-i}\left(O(\sigma)\right)\right] \geq \E\left[u_{-i}\left(O(\sigma'_i,\sigma_{-i})\right)\right]-\eps$$ for
every strategy $\sigma'_i\in T_i$ of player $i$.
\end{procdef}

Finally, we have the following theorem. Note that we are implicitly assuming that the extensive games in the claim are derived
from a cryptographic protocol or some other setting in which it is natural to discuss the ``rounds'' of a game.

\begin{procthm}
\label{thm:general}
Let $\Gamma=(H,P,A,u)$ be an extensive game with constraints
$T=(T_1,T_2)$, and let $\sigma=(\sigma_1,\sigma_2)$ be a weakly Pareto optimal $\eps$-NE of $\Gamma$
that is $\eps$-safe. Then $\sigma$ is an $\eps$-TFNE of $\Gamma$.
\end{procthm}

We also have the following corollary.

\begin{corollary}{\em
Let $\Gamma=(H,P,A,u)$ be a {\em zero-sum} extensive game with constraints
$T=(T_1,T_2)$, and let $\sigma$ be an $\eps$-NE of $\Gamma$.
Then $\sigma$ is an $\eps$-TFNE of $\Gamma$.}
\end{corollary}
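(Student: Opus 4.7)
My plan is to invoke the preceding theorem (Pareto optimal + $\eps$-NE + $\eps$-safe implies $\eps$-TFNE), so I only need to verify weak Pareto optimality and $\eps$-safety for $\sigma$; the $\eps$-NE hypothesis is given. Both of the remaining conditions should follow directly from the zero-sum identity $u_1+u_2\equiv c$ for some constant $c$.

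Weak Pareto optimality is the shorter of the two and I would dispose of it first. If some profile $\pi\in T$ simultaneously satisfied $\E[u_1(O(\pi))]>\E[u_1(O(\sigma))]$ and $\E[u_2(O(\pi))]>\E[u_2(O(\sigma))]$, then summing the two strict inequalities would give $\E[u_1(O(\pi))]+\E[u_2(O(\pi))]>\E[u_1(O(\sigma))]+\E[u_2(O(\sigma))]$, contradicting the fact that both sides equal the constant $c$.

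For $\eps$-safety I would derive the safety inequality from the $\eps$-NE condition using the zero-sum identity. Fix any player $i$ and any deviation $\sigma'_i\in T_i$; the $\eps$-NE condition applied to $i$ reads $\E[u_i(O(\sigma))]\geq \E[u_i(O(\sigma'_i,\sigma_{-i}))]-\eps$. Substituting $u_i=c-u_{-i}$ into both sides and simplifying converts this into the corresponding bound on $\E[u_{-i}(\cdot)]$, which is precisely the $\eps$-safety inequality for $i$'s deviations against $\sigma_{-i}$. Running the same substitution symmetrically for the other player verifies safety in full, and the preceding theorem then yields the conclusion that $\sigma$ is $\eps$-TFNE.

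Conceptually there is no real obstacle: the corollary reflects the observation that in a zero-sum game the NE condition and the safety condition are two sides of the same linear inequality, joined by $u_1+u_2\equiv c$, and moreover that no strategy profile can dominate another in both coordinates simultaneously. The only care needed is bookkeeping around inequality directions when performing the substitution $u_i=c-u_{-i}$, which is routine.
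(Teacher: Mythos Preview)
Your proposal is correct and follows exactly the same route as the paper: the paper's proof is the single observation that ``any $\eps$-NE of a zero-sum game is both weakly Pareto optimal and $\eps$-safe,'' and you simply unpack that observation by summing (for Pareto optimality) and substituting $u_i=c-u_{-i}$ into the $\eps$-NE inequality (for $\eps$-safety). There is nothing to add.
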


The corollary follows from the observation that any $\eps$-NE of a zero-sum game is both
weakly Pareto optimal and $\eps$-safe. Note that the corollary implies the threat-freeness
part of Theorem~\ref{thm:coin-flipping-tfne}.

We now prove Theorem~\ref{thm:general}.

\begin{proof}
Suppose towards contradiction that at least one of the players is facing an $\eps$-threat
with respect to $\sigma$ at some round. Let $R$ be the latest such round: that is, player $i$ is facing an $\eps$-threat at round $R$ with respect to $\sigma$, and  no player is facing an $\eps$-threat at any round $R'$ that follows $R$.

By Definition~\ref{def:threat} it follows that there exists a round $R$ strategy $\tau=\tau(R)$ for player $i$ such that
the set $\mathrm{Cont}(\sigma(1,\ldots,R\!-\!1),\tau(R))$ is nonempty, and such that for all
$\pi\in \mathrm{Cont}(\sigma(1,\ldots,R\!-\!1),\tau(R))$ and $\pi'\in \mathrm{Cont}(\sigma(1,\ldots,R))$
that are $\eps$-threat-free on $R$ it holds that
\be\E\left[u_{i}\left(O(\pi)\right)\right] >
\E\left[u_{i}\left(O(\pi')\right)\right]+\eps,\label{eqn:dev-gain2}\ee
where
\begin{align*}\sigma(1,\ldots,S)\eqdef \sigma(1),\ldots,\sigma(S)
\end{align*}
and
\begin{align*}\mathrm{Cont}(\sigma(1,\ldots,R))\eqdef\Big\{\pi\in T: \pi(S) = \sigma(S) \mbox{ for all } S\leq R\Big\}.
\end{align*}

Note that $\sigma\in \mathrm{Cont}(\sigma(1,\ldots,R))$. Also note that, because $R$ is the latest round on which an $\eps$-threat occurs, the profile $\sigma$ is $\eps$-threat-free on $R$.

Using inequality~(\ref{eqn:dev-gain2}) we can then infer that for any $\pi\in \mathrm{Cont}(\sigma(1,\ldots,R\!-\!1),\tau(R))$ that is $\eps$-threat-free on $R$ it holds that
\be\E\left[u_{i}\left(O(\pi)\right)\right] >
\E\left[u_{i}\left(O(\sigma)\right)\right]+\eps.\label{eqn:dev-gain3}\ee

Let $\pi^1\in \mathrm{Cont}(\sigma(1,\ldots,R\!-\!1),\tau(R))$ be one such $\eps$-threat-free profile, and let ${\sigma^{1}}=(\pi^1_i, \sigma_{-i})$.

Fix $R^1=R$ and $\tau^1=\tau$ for consistent notation.
We next ask, is player $i$ facing an $\eps$-threat with respect to ${\sigma^{1}}$
at any round $R'$ that follows $R^1$? If yes, let $R^2$ be the next such round: there is no $R'$ between $R^1$ and $R^2$
on which player $i$ is facing an $\eps$-threat with respect to ${\sigma^{1}}$.
By Definition~\ref{def:threat} it follows that there exists a round $R^2$ strategy $\tau^2$ for player $i$
such that $\mathrm{Cont}(\sigma^1(1,\ldots,R^2\!-\!1),\tau^2(R^2))$ is nonempty, and such that
for all $\pi\in \mathrm{Cont}(\sigma^1(1,\ldots,R^2\!-\!1),\tau^2(R^2))$ and $\pi'\in \mathrm{Cont}(\sigma^1(1,\ldots,R^2))$ that are $\eps$-threat-free on $R^2$ it holds that
$$\E\left[u_{i}\left(O(\pi)\right)\right] >
\E\left[u_{i}\left(O(\pi')\right)\right]+\eps.$$

Assume $\tau^2$ is maximal, in the sense that for any $\pi\in\mathrm{Cont}(\sigma^1(1,\ldots,R^2-1),\tau^2(R^2))$ that is $\eps$-threat-free on $R^2$, player $i$ is \textit{not} facing an $\eps$-threat at round $R^2$ with respect to $\pi$. Pick some arbitrary $\pi^{2}\in \mathrm{Cont}(\sigma^1(1,\ldots,R^2\!-\!1),\tau^2(R^2))$, and fix ${\sigma^{2}}=(\pi^2_i,\sigma_{-i})$.

We now repeat the above procedure, finding the next threat to player $i$ and letting him act on that threat, as follows.
For $t=3,4,\ldots$ we ask, is player $i$ facing an $\eps$-threat with respect to ${\sigma^{t-1}}$
at any round $R'$ that follows $R^{t-1}$? If yes, let $R^t$ be the next such round: there is no $R'$ between $R^{t-1}$ and $R^t$
on which player $i$ is facing an $\eps$-threat with respect to ${\sigma^{t-1}}$.

By Definition~\ref{def:threat} it follows that there exists a round $R^t$ strategy $\tau^t$ for player $i$
such that $\mathrm{Cont}(\sigma^{t-1}(1,\ldots,R^t\!-\!1),\tau^t(R^t))$ is nonempty, and such that for all
$\pi\in\mathrm{Cont}(\sigma^{t-1}(1,\ldots,R^t\!-\!1),\tau^t(R^t))$
and $\pi'\in\mathrm{Cont}(\sigma^{t-1}(1,\ldots,R^t))$ that are $\eps$-threat-free on $R^t$ it holds that
$$\E\left[u_{i}\left(O(\pi)\right)\right] >
\E\left[u_{i}\left(O(\pi')\right)\right]+\eps.$$

Assume $\tau^t$ is maximal, in the sense that for any $\pi\in\mathrm{Cont}(\sigma^{t-1}(1,\ldots,R^t\!-~\!1),\tau^t(R^t))$ that is $\eps$-threat-free on $R^t$, player $i$ is \textit{not} facing an $\eps$-threat at round $R^t$ with respect to $\pi$. Pick some arbitrary $\pi^{t}\in \mathrm{Cont}(\sigma^{t-1}(1,\ldots,R^t\!-\!1),\tau^t(R^t))$, and fix ${\sigma^{t}}=(\pi^t_i,\sigma_{-i})$.

Finally, after repeating this for all $t$ until there are no more $\eps$-threats to $P_i$ on any round that follows $R$,
we are left with a profile ${\sigma^{C}}=(\pi^C_i,\sigma_{-i})$ on which player $i$ is not facing an $\eps$-threat
at any round below $R$.

Fix $\rho = {\sigma^{C}}$, and recall that, by construction, $\rho_{-i}=\sigma_{-i}$. Because $\sigma$ is $\eps$-safe, it must be the case that
\be\E\left[u_{-i}\left(O(\rho)\right)\right] \geq
\E\left[u_{-i}\left(O(\sigma)\right)\right]-\eps.\label{eqn:constraints22}\ee

We next ask, is player $-i$ facing an $\eps$-threat with respect to $\rho$ at any round $S$ that follows $R$?
As the following claim shows, the answer is positive:
\begin{claim}
Player $-i$ is facing an $\eps$-threat with respect to $\rho$ at some round $S$ that follows $R$.
\end{claim}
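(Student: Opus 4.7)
The plan is to argue by contradiction using only the $\eps$-NE property of $\sigma$: assume player $-i$ is not facing any $\eps$-threat at any round $S$ that follows $R$ with respect to $\rho$, and derive a violation of $\sigma$ being an $\eps$-NE. The iterative construction of $\rho = \sigma^C$ already secures by design that player $i$ is not facing an $\eps$-threat at any round after $R$ with respect to $\rho$, since each such threat was eliminated in turn by the procedure. Combined with the contradiction hypothesis on player $-i$, this will mean that no player is $\eps$-threatened at any round after $R$ with respect to $\rho$, so $\rho$ is $\eps$-threat-free on $R$.

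Next I will check that $\rho \in \mathrm{Cont}(\sigma(1,\ldots,R\!-\!1),\tau(R))$. By construction, $\rho_{-i} = \sigma_{-i}$, so player $-i$'s plays match $\sigma$ on every round. For player $i$, induction on $t$ using $\pi^t \in \mathrm{Cont}(\sigma^{t-1}(1,\ldots,R^t\!-\!1),\tau^t(R^t))$ together with the strict increase $R = R^1 < R^2 < \cdots$ yields $\rho_i(S) = \pi^1_i(S) = \sigma_i(S)$ for all $S < R$ and $\rho_i(R) = \pi^1_i(R) = \tau(R)$, as required.

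Finally, I will apply inequality~(\ref{eqn:dev-gain3}) with $\pi = \rho$ -- this is legitimate because $\rho$ lies in the set $\mathrm{Cont}(\sigma(1,\ldots,R\!-\!1),\tau(R))$ and is $\eps$-threat-free on $R$ -- to conclude $\E[u_i(O(\rho))] > \E[u_i(O(\sigma))] + \eps$. Since $\rho = (\pi^C_i,\sigma_{-i})$ differs from $\sigma$ only in player $i$'s strategy, this is a unilateral deviation by player $i$ that strictly improves his payoff by more than $\eps$, contradicting the $\eps$-NE property of $\sigma$. The main technical point requiring care will be the induction verifying membership of $\rho$ in the right $\mathrm{Cont}$ set; once that is in hand, the rest is an immediate appeal to the threat inequality derived just before the claim was stated. (Note that this step of the proof does not yet invoke weak Pareto optimality or $\eps$-safety — those hypotheses will only be needed in the subsequent step, where one exploits the $\eps$-threat on $-i$ produced by this claim to derive the final contradiction to the theorem.)
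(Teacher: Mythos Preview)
Your proposal is correct and follows essentially the same argument as the paper: assume $-i$ faces no $\eps$-threat after $R$, combine with the construction guaranteeing $i$ faces none either so that $\rho$ is $\eps$-threat-free on $R$, then invoke~(\ref{eqn:dev-gain3}) on $\rho\in\mathrm{Cont}(\sigma(1,\ldots,R\!-\!1),\tau(R))$ to obtain a unilateral gain for player $i$ contradicting the $\eps$-NE property. Your added inductive verification of membership in the $\mathrm{Cont}$ set and the closing remark on which hypotheses are (not yet) used are welcome elaborations, but the core logic is identical to the paper's.
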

\ifnum\draft=2
\else
\begin{itemize}
\item[]
\fi
\begin{proof}
Suppose not. By our construction of $\rho$, player $i$ is also not facing
an $\eps$-threat with respect to $\rho$ at any round that follows $R$. This means that the
profile $\rho$ is $\eps$-threat-free on the subgames $R$.

Since $\rho\in \mathrm{Cont}(\sigma(1,\ldots,R-1),\tau(R))$ and since  $\sigma\in \mathrm{Cont}(\sigma(1,\ldots,R))$ is $\eps$-threat-free on $R$, we can then use (\ref{eqn:dev-gain3}) to infer that
$$\E\left[u_{i}\left(O(\rho)\right)\right] >
\E\left[u_{i}\left(O(\sigma)\right)\right]+\eps.$$
However, since $\rho=(\pi^C_i,\sigma_{-i})$ is a \textit{unilateral} deviation of player $i$, this
contradicts the fact that $\sigma$ constitutes an $\eps$-NE.
\end{proof}
\ifnum\draft=2
\else
\end{itemize}
\fi

Let $S^1$ be the latest round on which $P_{-i}$ is facing an $\eps$-threat with respect to $\rho$.
By Definition~\ref{def:threat} it follows that there exists a round $S^1$ strategy $\mu^1$ for player $-i$
such that $\mathrm{Cont}(\rho(1,\ldots,S^1\!-\!1),\mu^1(S^1))$ is nonempty, and such that
for all $\pi\in\mathrm{Cont}(\rho(1,\ldots,S^1\!-\!1),\mu^1(S^1))$ and $\pi'\in \mathrm{Cont}(\rho(1,\ldots,S^1))$ that are $\eps$-threat-free on $S^1$ it holds that
$$\E\left[u_{i}\left(O(\pi)\right)\right] >
\E\left[u_{i}\left(O(\pi')\right)\right]+\eps.$$

Assume $\mu^1$ is maximal, in the sense that for any $\pi\in\mathrm{Cont}(\rho(1,\ldots,S^1\!-\!1),\mu^1(S^1))$  that is $\eps$-threat-free on $S^1$, player $-i$ is \textit{not} facing an $\eps$-threat at round $S^1$ with respect to $\pi$. Pick some  $\rho^{1}\in\mathrm{Cont}(\rho(1,\ldots,S^1\!-\!1),\mu^1(S^1))$ that is $\eps$-threat-free on $S^1$ -- such a $\rho^1$ must exist by Proposition~\ref{prop:well-defined}.

Now, note that because $S^1$ was the last round on which $P_{-i}$ is facing an $\eps$-threat, and because $P_i$ is not facing an $\eps$-threat at any
round following $R$ with respect to $\rho$, it must be the case that $\rho$ is $\eps$-threat-free on $S^1$. Since $\rho \in \mathrm{Cont}(\rho(1,\ldots,S^1))$ we then have that
$$\E\left[u_{-i}\left(O(\rho^1)\right)\right] >\E\left[u_{-i}\left(O(\rho)\right)\right] + \eps \geq
\E\left[u_{-i}\left(O(\sigma)\right)\right],$$
where the second inequality follows from~\eqref{eqn:constraints22}. We now repeat the above procedure, finding the preceding threat to player $-i$ (but that still follows $R$) and letting him act on that threat, as follows.
For $t=2,3,\ldots$ we ask, is $P_{-i}$ facing an $\eps$-threat with respect to $\rho^{t-1}$ at any round $S$ that follows $R$? If yes,
let $S^t$ be the latest such round.
By Definition~\ref{def:threat} it follows that there exists a round $S^t$ strategy $\mu^t$ for player $-i$
such that $\mathrm{Cont}(\rho^{t-1}(1,\ldots,S^t\!-\!1),\mu^t(S^t))$  is nonempty, and such that
for all $\pi\in\mathrm{Cont}(\rho^{t-1}(1,\ldots,S^t\!-\!1),\mu^t(S^t))$ and $\pi'\in \mathrm{Cont}(\rho^{t-1}(1,\ldots,S^t))$ that are $\eps$-threat-free on $S^t$ it holds that
$$\E\left[u_{i}\left(O(\pi)\right)\right] >
\E\left[u_{i}\left(O(\pi')\right)\right]+\eps.$$

Assume $\mu^t$ is maximal, in the sense that for any $\pi\in\mathrm{Cont}(\rho^{t-1}(1,\ldots,S^t\!-~\!1),\mu^t(S^t))$ that is $\eps$-threat-free on $S^t$, player $-i$ is \textit{not} facing an $\eps$-threat at round $S^t$ with respect to $\pi$. Pick some $\rho^{t}\in \mathrm{Cont}(\rho^{t-1}(1,\ldots,S^t\!-~\!1),\mu^t(S^t))$ that is $\eps$-threat-free on $S^t$ -- again, such a $\rho^t$ must exist by Proposition~\ref{prop:well-defined}.

Now, note that because $S^t$ was the last round on which $P_{-i}$ is facing an $\eps$-threat,
$P_{-i}$ is not facing an $\eps$-threat with respect to $\rho^{t-1}$ at any round following $S^t$.
Since $\rho^{t-1}$ was chosen to be $\eps$-threat free on $S^{t-1}$, player $i$ is not facing an $\eps$-threat with
respect to $\rho^{t-1}$ at any round following $S^{t-1}$. Finally, by construction, $P_i$ is not facing an $\eps$-threat at any
round following $R$ with respect to $\rho$. Since $\rho$ and $\rho^{t-1}$ are equivalent up to round $S^{t-1}$, it must be the case that
$P_i$ is not facing an $\eps$-threat with respect to $\rho^{t-1}$ at any round between $S^t$ and $S^{t-1}$ either. Thus,
 $\rho^{t-1}$ is $\eps$-threat-free on $S^t$. Since $\rho^{t-1}\in \mathrm{Cont}(\rho^{t-1}(1,\ldots,S^t))$, we then have that
 \begin{align*}
\E\left[u_{-i}\left(O(\rho^t)\right)\right] &> \E\left[u_{-i}\left(O(\rho^{t-1})\right)\right] + \eps\\
&> \E\left[u_{-i}\left(O(\rho)\right)\right] + t\cdot\eps\\
&\geq \E\left[u_{-i}\left(O(\sigma)\right)\right] + (t-1)\cdot\eps.
\end{align*}

Finally, after repeating this for all $t$ until there are no more $\eps$-threats to $P_{-i}$ at any round that follows $R$,
we are left with a profile $\rho^D\in \mathrm{Cont}(\sigma(1,\ldots,R\!-\!1),\tau(R))$ on which both $P_i$ and $P_{-i}$ are not facing an $\eps$-threat at any round that follows $R$. We can then use (\ref{eqn:dev-gain3}) to infer that
$$\E\left[u_{i}\left(O(\rho^D)\right)\right] >
\E\left[u_{i}\left(O(\sigma)\right)\right]+\eps.$$
Furthermore, $\rho^D$ satisfies
$$\E\left[u_{-i}\left(O(\rho^D)\right)\right] >\E\left[u_{-i}\left(O(\rho^{D-1})\right)\right] + D\cdot\eps \geq
\E\left[u_{-i}\left(O(\sigma)\right)\right],$$
since $D\geq 1$.

We conclude that on the profile $\rho^D$ both players strictly improve over $\sigma$, contradicting the weak
Pareto optimality of $\sigma$. Hence
no player is facing an $\eps$-threat with respect to $\sigma$ at any round $R$, and this, coupled with the fact that $\sigma$ is an
$\eps$-NE, yields that profile an $\eps$-TFNE.
\end{proof}

\fi

\section*{Acknowledgments} We thank Eddie Dekel, Oded Goldreich, Ehud Kalai, Eran Omri, and Gil Segev for helpful conversations, and the
anonymous referees for careful reading and insightful comments.

\addcontentsline{toc}{section}{References}
\bibliographystyle{plain}
\bibliography{cgt}

\ifnum\draft=2
\else
\appendix

\section{One-way Functions and Commitment Schemes}
\label{apx:defs}

A function $f$ is one-way if it is easy to compute but hard to
invert given the image of a random input. More formally,

\begin{definition}[One-way functions]\label{Definition:OWF}
A function $f : \{0,1\}^* \rightarrow \{0,1\}^*$ is said to be {\em
one-way} if the following two conditions hold:
\begin{enumerate}
\item There exists a polynomial-time algorithm that on input $x$ outputs $f(x)$.

\item For every probabilistic polynomial-time algorithm $\mathcal{A}$, every polynomial $p(\cdot)$, and all sufficiently large $n$'s
    \[ \pr{\mathcal{A}(1^n, f(U_n)) \in f^{-1} (f(U_n))} < \frac{1}{p(n)} \enspace , \]%
    where $U_n$ denotes the uniform distribution over $\{0,1\}^n$.
\end{enumerate}
\end{definition}

In this paper we also deal with one-way permutations, and we note
that the above definition naturally extends to consider
permutations.

\bigskip

A commitment scheme is a two-stage interactive protocol between a
sender and a receiver. After the first stage of the protocol, which
is referred to as the {\em commit stage}, the sender is bound to at
most one value, not yet revealed to the receiver. In the second
stage, which is referred to as the {\em reveal stage}, the sender
reveals its committed value to the receiver. For simplicity of
exposition, we will focus on bit-commitment schemes, i.e.,
commitment schemes in which the committed value is only one bit. A
bit-commitment scheme is defined via a triplet of probabilistic
polynomial-time Turing-machines $(\mathcal{S},\mathcal{R},
\mathcal{V})$ such that:
\begin{itemize}
\item $\mathcal{S}$ receives as input the security parameter $1^n$ and a bit $b$. Following its
    interaction, it outputs some information ${\sf decom}$ (the decommitment).

\item $\mathcal{R}$ receives as input the security parameter $1^n$. Following its interaction,
    it outputs a state information ${\sf com}$ (the commitment).

\item $\mathcal{V}$ (acting as the receiver in the reveal stage\footnote{Note that there is no loss of generality in assuming that the reveal stage is non-interactive. This is since any
    such interactive stage can be replaced with a non-interactive one as follows: The sender
    sends its internal state to the receiver, who then simulates the sender in the interactive
    stage.}) receives as input the security parameter $1^n$, a commitment ${\sf com}$ and a
    decommitment ${\sf decom}$. It outputs either a bit $b'$ or $\bot$.
\end{itemize}

Denote by $({\sf decom} | {\sf com}) \leftarrow \langle
\mathcal{S}(1^n , b), \mathcal{R}(1^n) \rangle$ the experiment in
which $\mathcal{S}$ and $\mathcal{R}$ interact (using the given
inputs and uniformly chosen random coins), and then $\mathcal{S}$
outputs ${\sf decom}$ while $\mathcal{R}$ outputs ${\sf com}$. It is
required that for all $n$, every bit $b$, and every pair $({\sf
decom} | {\sf com})$ that may be output by $\langle \mathcal{S}(1^n
, b), \mathcal{R}(1^n) \rangle$, it holds that $\mathcal{V} ({\sf
com}, {\sf decom}) = b$.\footnote{Although we assume perfect
completeness, it is not essential for our results.}

The security of a commitment scheme can be defined in two
complementary ways, protecting against either an all-powerful sender
or an all-powerful receiver. The former are referred to as {\em
statistically-binding} commitment schemes, whereas the latter are
referred to as {\em statistically-hiding} commitment schemes. For
simplicity, we assume that the associated ``error'' is zero,
resulting in {\em perfectly-binding} and {\em perfectly-hiding}
commitments schemes.

In order to define the security properties of such schemes, we first
introduce the following notation. Given a commitment scheme
$(\mathcal{S},\mathcal{R}, \mathcal{V})$ and a Turing machine
$\mathcal{R}^*$, we denote by ${\sf view}_{\langle
\mathcal{S}(b),\mathcal{R}^* \rangle}(1^n)$ the distribution of the
view of $\mathcal{R}^*$ when interacting with $\mathcal{S}(1^n, b)$.
This view consists of $\mathcal{R}^*$'s random coins and of the
sequence of messages it receives from $\mathcal{S}$. The
distribution is taken over the random coins of both $\mathcal{S}$
and $\mathcal{R}$. Similarly, given a Turing machine $\mathcal{S}^*$
we denote by ${\sf view}_{\langle \mathcal{S^*}(1^n),\mathcal{R}
\rangle}(1^n)$ the view of $\mathcal{S}^*$ when interacting with
$\mathcal{R}(1^n)$. Note that whenever no computational restrictions
are assumed on $\mathcal{S}^*$ or $\mathcal{R}^*$, then without loss
of generality they can be assumed to be deterministic.

\begin{definition}[Perfectly-binding commitment]\label{def:perf-hiding}
A bit-commitment scheme $(\mathcal{S}, \mathcal{R}, \mathcal{V})$ is
said to be {\em perfectly-hiding} if it satisfies the following two
properties:
\begin{itemize}
\item {\bf Computational hiding:} for every probabilistic polynomial-time Turing machine
    $\mathcal{R}^*$ the ensembles $\{ {\sf view}_{\langle \mathcal{S}(0),\mathcal{R}^*
    \rangle}(1^n) \}_{n\in \mathbb{N}}$ and $\{ {\sf view}_{\langle
    \mathcal{S}(1),\mathcal{R}^* \rangle}(1^n) \}_{n\in \mathbb{N}}$ are computationally
    indistinguishable.
\item {\bf Perfect binding:} for every Turing machine $\mathcal{S}^*$
\[ \pr{({\sf (decom, decom')} | {\sf com}) \leftarrow \langle
\mathcal{S}^*(1^n), \mathcal{R}(1^n) \rangle : \MyAtop{\mathcal{V}
({\sf com}, {\sf decom}) =
0}{\mathcal{V} ({\sf com}, {\sf decom'}) = 1 }} = 0 \enspace  , \]%
for all sufficiently large $n$, where the probability is taken over
the random coins of $\mathcal{R}$.
\end{itemize}
\end{definition}

Perfectly-binding commitments can be constructed assuming the
existence of any one-way permutation~\cite{Bl}. The construction is
``non-interactive,'' meaning that the commitment phase consists of a
single message sent from the sender $\mathcal{S}$ to the receiver
$\mathcal{R}$.

\begin{definition}[Perfectly-hiding commitment]\label{def:perf-hiding}
A bit-commitment scheme $(\mathcal{S}, \mathcal{R}, \mathcal{V})$ is
said to be {\em perfectly-hiding} if it satisfies the following two
properties:
\begin{itemize}
\item {\bf Perfect hiding:} for every Turing machine $\mathcal{R}^*$ the ensembles $\{ {\sf
    view}_{\langle \mathcal{S}(0),\mathcal{R}^* \rangle}(1^n) \}_{n\in \mathbb{N}}$ and $\{
    {\sf view}_{\langle \mathcal{S}(1),\mathcal{R}^* \rangle}(1^n) \}_{n\in \mathbb{N}}$ are
    identically distributed.
\item {\bf Computational binding:} for every probabilistic polynomial-time Turing machine
    $\mathcal{S}^*$ the exists a negligible function $\mu(n)$ so that
\[ \pr{({\sf (decom, decom')} | {\sf com}) \leftarrow \langle
\mathcal{S}^*(1^n), \mathcal{R}(1^n) \rangle : \MyAtop{\mathcal{V}
({\sf com}, {\sf decom}) = 0}{\mathcal{V} ({\sf com}, {\sf decom'})
= 1 }} < \mu(n) \enspace  , \]%
for all sufficiently large $n$, where the probability is taken over
the random coins of both $\mathcal{S}^*$ and $\mathcal{R}$.
\end{itemize}
\end{definition}

Perfectly-hiding commitments can be constructed assuming the
existence of any one-way permutation~\cite{NOVY}. This construction
is ``highly-interactive,'' in that the commitment phase requires the
exchange of $n-1$ messages between the sender and the receiver,
where $n$ is the security parameter. By relaxing the hiding
condition to be only ``statistical'' it is possible to weaken the
underlying assumption to the existence of one-way
functions~\cite{HR}. Assuming the existence of collision resistant
hash functions, it is possible to construct two-message
statistically-hiding commitments~\cite{NY,DPP}.

\fi

\end{document}